\documentclass[sigconf]{acmart}
\setcopyright{none}
\renewcommand\footnotetextcopyrightpermission[1]{}
\usepackage{hyperref}
\usepackage{url}
\usepackage{dsfont}
\usepackage{subcaption}
\usepackage{booktabs}
\usepackage{multicol,multirow}
\usepackage{wrapfig}
\usepackage{graphicx}
\usepackage{makecell}

\usepackage{amsmath,amsfonts,bm}

\def\eqref#1{equation~\ref{#1}}

\def\1{\bm{1}}

\def\vzero{{\bm{0}}}

\def\va{{\bm{a}}}

\def\vu{{\bm{u}}}
\def\vv{{\bm{v}}}
\def\vw{{\bm{w}}}
\def\vx{{\bm{x}}}
\def\vy{{\bm{y}}}

\def\mW{{\bm{W}}}

\DeclareMathAlphabet{\mathsfit}{\encodingdefault}{\sfdefault}{m}{sl}
\SetMathAlphabet{\mathsfit}{bold}{\encodingdefault}{\sfdefault}{bx}{n}

\newcommand{\softmax}{\mathrm{softmax}}

\DeclareMathOperator*{\argmax}{arg\,max}

\usepackage{enumitem}
\newcommand\supp{\mathrm{supp}}
\newcommand\KeepTopK{\operatorname{KeepTopK}}

\newcommand{\mtd}{\textsc{MoS}}

\newtheorem{phe}{Phenomenon}
\newtheorem{prop}{Proposition}

\AtBeginDocument{%
  }

\copyrightyear{2026}
\acmYear{2026}
\setcopyright{cc}
\setcctype{by}
\acmConference[WWW '26]{Proceedings of the ACM Web Conference 2026}{April 13--17, 2026}{Dubai, United Arab Emirates}
\acmBooktitle{Proceedings of the ACM Web Conference 2026 (WWW '26), April 13--17, 2026, Dubai, United Arab Emirates}
\acmPrice{}
\acmDOI{10.1145/3774904.3792483}
\acmISBN{979-8-4007-2307-0/2026/04}

\begin{document}

\title{Mixture of Sequence: Theme-Aware Mixture-of-Experts for Long-Sequence Recommendation}


\settopmatter{authorsperrow=5}



\author{Xiao Lin}
\affiliation{%
  \institution{UIUC}
  \country{United States}
}
\email{xiaol13@illinois.edu}

\author{Zhicheng Tang}
\affiliation{%
  \institution{Meta}
  \country{United States}
}
\email{roberttang@meta.com}

\author{Weilin Cong, Mengyue Hang}
\affiliation{%
  \institution{Meta}
  \country{United States}
}

\author{Kai Wang}
\affiliation{%
  \institution{Meta}
  \country{United States}
}
\email{wangkai@meta.com}

\author{Yajuan Wang}
\affiliation{%
  \institution{Meta}
  \country{United States}
}
\email{yajuanwang@meta.com}

\author{Zhichen	Zeng, Ting-Wei Li, Hyunsik Yoo}
\affiliation{%
  \institution{UIUC}
  \country{United States}
}

\author{Zhining Liu, Xuying Ning, Ruizhong Qiu}
\affiliation{%
  \institution{UIUC}
  \country{United States}
}

\author{Wen-Yen	Chen, Shuo Chang, Rong Jin}
\affiliation{%
  \institution{Meta}
  \country{United States}
}

\author{Huayu Li}
\affiliation{%
  \institution{Meta}
  \country{United States}
}
\email{huayuli@meta.com}

\author{Hanghang Tong}
\affiliation{%
  \institution{UIUC}
  \country{United States}
}
\email{htong@illinois.edu}

\renewcommand{\shortauthors}{Xiao Lin et al.}

\begin{abstract}
Sequential recommendation has emerged as a rapidly growing research area in click-through rate prediction due to its ability to capture dynamic user interests from historical interaction sequences. A key challenge, however, lies in modeling long sequences, where users often exhibit pronounced interest shifts, thereby introducing substantial irrelevant or even misleading information into the prediction process. Our empirical analysis corroborates this challenge and further uncovers a recurring behavioral pattern in long sequences, which we term the \textit{session hopping} phenomenon: while user interests remain stable within a short temporal span, referred to as a \textit{session}, they often exhibit drastic shifts across sessions and may reappear after multiple sessions.
To address this challenge, we propose the Mixture of Sequence (MoS) framework, a model-agnostic MoE approach that achieves accurate predictions by extracting theme-specific and multi-scale subsequences from noisy raw user sequences. First, MoS employs a theme-aware routing mechanism to adaptively learn the latent themes of user sequences and organizes these sequences into multiple coherent subsequences. Each subsequence contains only sessions aligned with a specific theme, thereby effectively filtering out irrelevant or even misleading information introduced by user interest shifts in session hopping. In addition, to alleviate potential information loss caused by subsequence extraction, we introduce a multi-scale fusion mechanism, which leverages three types of experts to capture global sequence characteristics, short-term user behaviors, and theme-specific semantic patterns. Together, these two mechanisms endow MoS with the ability to deliver accurate recommendations from multi-faceted and multi-scale perspectives. Experimental results demonstrate that MoS consistently improves the performance of long-sequence recommendation models while introducing fewer FLOPs compared with other MoE counterparts, providing strong evidence of its excellent balance between utility and efficiency. The code is available at \url{https://github.com/xiaolin-cs/MoS}.

\end{abstract}

\begin{CCSXML}
<ccs2012>
 <concept>
  <concept_id>00000000.0000000.0000000</concept_id>
  <concept_desc>Do Not Use This Code, Generate the Correct Terms for Your Paper</concept_desc>
  <concept_significance>500</concept_significance>
 </concept>
 <concept>
  <concept_id>00000000.00000000.00000000</concept_id>
  <concept_desc>Do Not Use This Code, Generate the Correct Terms for Your Paper</concept_desc>
  <concept_significance>300</concept_significance>
 </concept>
 <concept>
  <concept_id>00000000.00000000.00000000</concept_id>
  <concept_desc>Do Not Use This Code, Generate the Correct Terms for Your Paper</concept_desc>
  <concept_significance>100</concept_significance>
 </concept>
 <concept>
  <concept_id>00000000.00000000.00000000</concept_id>
  <concept_desc>Do Not Use This Code, Generate the Correct Terms for Your Paper</concept_desc>
  <concept_significance>100</concept_significance>
 </concept>
</ccs2012>
\end{CCSXML}


\vspace{-2mm}
\keywords{Recommendation, Mixture of Experts, Sequence Modeling}
\vspace{-2mm}


\maketitle

\vspace{-2mm}
\section{Introduction}

Click-through rate (CTR) prediction \cite{ctr1, ctr2, ctr3} serves as a cornerstone in online advertising \cite{wang2020survey} and recommender systems \cite{ctr3,ctr1,ctr2,jing2022coin,jing2024sterling,li2022unsupervised,li2024large}, enabling users to efficiently discover items of interest from an ever-growing pool of products \cite{product}, videos \cite{video1,video2}, and applications \cite{content}. In recent years, modeling user behavior sequences has proven to be a highly effective strategy for enhancing CTR performance, as such sequences capture rich temporal dynamics that reflect the evolution of user interests. This advantage has spurred substantial research interest in sequence recommendation \cite{seqrec1, seqrec2,seqrec3,seqrec4}. Furthermore, extending the sequence length is particularly appealing, as it allows models to exploit long-term behavioral patterns. Motivated by this, recent studies \cite{chang2023twin,xia2023transact,sdim,chai2025longer} have begun to investigate the feasibility of modeling very long sequences and have reported promising improvements.

Despite the advantages of leveraging long interaction sequences for CTR prediction, they introduce significant challenges. As user interests often undergo substantial shifts within long sequences \cite{shift1,shift2,shift3}, many interactions become irrelevant to the current prediction or even inject misleading signals. For instance, on an e-commerce platform \cite{e-commerce}, a user may simultaneously exhibit diverse interests, such as browsing both electronic devices and athletic equipment. An illustration is provided in Figure. \ref{fig:session hopping}. Ideally, when recommending an electronic product, the model should primarily focus on the subsequence associated with electronics while down-weighting interactions w.r.t. unrelated interests, such as athletic equipment. However, if the model indiscriminately incorporates the entire sequence, misleading interactions can dominate the representation and ultimately lead to erroneous predictions \cite{noisy1,noisy2}. This issue motivates us to raise a critical question for long-sequence modeling.

\textit{How could a recommendation model, by focusing on strategically chosen subsets of a long sequence, gain complementary perspectives 
that lead to more accurate predictions?}

\begin{figure}
    \centering
    \includegraphics[width=\linewidth]{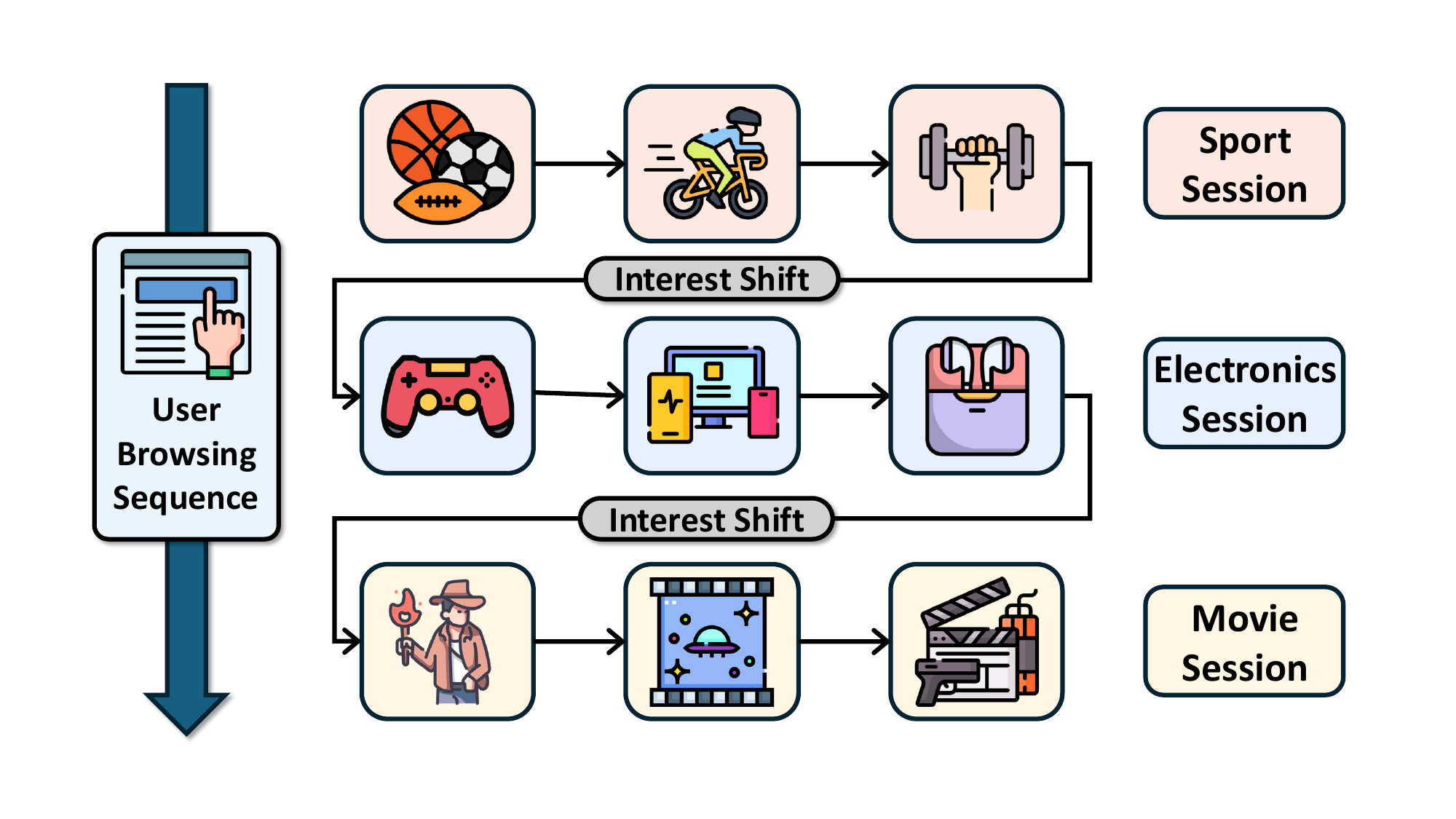}
    \vspace{-10mm}
    \caption{An illustration of user interests. The user undergoes two interest shifts, resulting in three clearly distinct sessions. Within each session, the browsing history reflects a consistent user interest.}
    \label{fig:session hopping}
    \vspace{-5mm}
\end{figure}

To address this question, we investigate the distinctive behavioral patterns embedded in long sequential data. Through a comprehensive empirical analysis, we identify a phenomenon we term \textit{session hopping}: user interests remain highly consistent within a short temporal span (a “\textit{session}”), yet may shift abruptly across adjacent sessions and reappear after several sessions. This phenomenon clearly demonstrates the sudden and recurring transitions in user preferences, highlighting the substantial challenges it poses for long-sequence recommendation. Existing methods often rely on attention \cite{xia2023transact,sdim,chang2023twin}, selective copying \cite{mamba2,liu2024mamba4rec} or similar mechanisms \cite{rnn1} to capture long-range dependencies, but these mechanisms are designed as continuous function approximators for smooth dynamics. As such, they struggle to fit the discontinuous signals from abrupt interest shifts and hence continue to exploit information from sessions prior to the shift. As illustrated in Figure~\ref{fig:session hopping}, once an interest shift occurs, the items chosen by the user differ substantially from those in earlier sessions. Consequently, prior session information not only fails to provide useful guidance but may also mislead the prediction.
 
To address the challenges introduced by the session hopping phenomenon, we propose the Mixture of Sequence (\mtd{}), a model-agnostic MoE framework that leverages subsequence extraction to selectively utilize informative sessions for accurate prediction. Intuitively, since similar sessions tend to reappear, these sessions can be grouped into subsequences that reflect specific themes of user behaviors (e.g., sports). By assigning a dedicated expert to each subsequence, \mtd{} enables every expert to fully exploit informative sessions in the theme-aware subsequence while effectively filtering out misleading signals from other themes, thereby mitigating the negative effects of abrupt interest shifts. As a result, \mtd{} serves as a versatile plug-in design that empowers existing sequential models to overcome the challenges posed by session hopping. Specifically, \mtd{} consists of two key components: theme-aware routing and a multi-scale fusion mechanism.
(1) The theme-aware routing mechanism introduces a router equipped with a codebook that defines user-behavior themes, allowing \mtd{} to adaptively extract highly coherent, theme-specific subsequences from long interaction histories. (2) The multi-scale fusion mechanism mitigates the potential disruption of semantic and temporal continuity caused by subsequence extraction. It incorporates three types of experts that respectively capture global characteristics of the sequence, theme-specific patterns within subsequences, and short-term contiguous user behaviors. Together, these two components enable \mtd{} to model user dynamics from multi-thematic and multi-scale perspectives, leading to more accurate sequential recommendations.

In summary, our main contributions are as follows:
\begin{itemize}[leftmargin=2em, labelsep=1em]
    \item \textbf{Observation.} We empirically demonstrate the existence of the \textit{session hopping} phenomenon on real-world data, which manifests as both abrupt shifts and reappearance of user interests, thereby highlighting the inherent challenges of modeling long-sequence recommendation.
    \item \textbf{Algorithm.} As a model-agnostic MoE method, \mtd{}  introduces a router that perform adaptive theme-aware subsequence extraction, and leverages three types of experts to jointly capture global, short-term, and semantic characteristics. These designs enable accurate recommendations even in the presence of highly noisy long-sequence data.
    \item \textbf{Experiment.} We conduct extensive experiments on three real-world datasets. The results show that \mtd{} consistently improves performance across four sequence recommendation backbones, yielding an average gain of 0.68\% in AUC and 0.72\% in GAUC. \mtd{} surpasses all other MoE methods while requiring fewer FLOPs, demonstrating an excellent balance between utility and efficiency.
\end{itemize}

\vspace{-2mm}
\section{Preliminary}
\subsection{Sequential Recommendation}
\vspace{-1mm}

Let $\mathcal{U}$ and $\mathcal{V}$ denote the sets of users and items, respectively, with $u \in \mathcal{U}$ and $v \in \mathcal{V}$ representing a specific user and item. We use $\vert \mathcal{U} \vert$ and $\vert \mathcal{V} \vert$ to denote the sizes of these sets. For each user $u$, we define the interaction history as a chronologically ordered sequence $\mathcal{S}_u = (v_{1, u}, v_{2,u}, \dots, v_{t, u})$ where $v_{i,u}$ is the $i$-th interacted item and $t$ is the sequence length for user $u$. For notational simplicity, we use $v_{i}$ to denote $v_{i,u}$ and $\mathcal{S}$ to denote $\mathcal{S}_u$ in the following. Given a user interaction sequence $\mathcal{S}$, the objective of sequential recommendation is to predict the next item the user will interact with at time step $t+1$. Formally, the task can be expressed as:
\begin{equation}
    v_{t+1}^* = \argmax_{v \in \mathcal{V}}P(v \vert \mathcal{S}; \Theta)
\end{equation}
where $\Theta$ denotes the model parameters.
\vspace{-1mm}
\subsection{Sparse Mixture of Experts}

For a sparse MoE model with $n$ experts $\{F^{(1)}, \ldots, F^{(n)}\}$, a learnable sparse router $G(\cdot)$ is leveraged to decide which experts are activated for a given item embedding $\vx$. The model prediction is a weighted combination of expert outputs, i.e., $\vy = \sum_{i=1}^n G(\vx)[i] F^{(n)}(\vx)$ where $G(\vx)[i]$ denotes the $i$-th element of $G(\vx)$, i.e., the routing weight assigned to expert $F_i$. In practice, $G(\cdot)$ is parameterized by a scoring network $H(\cdot)$ that produces relevance scores of experts. To enforce sparsity, only the top-$k$ experts with the highest scores are selected:
\begin{equation}\label{eq:smoe}
    G(\vx) = \softmax (\KeepTopK(H(\vx), k))
\end{equation}
where $\KeepTopK(\vv, k)$ preserves the $k$ largest elements of $\vv$ and replaces the rest with $-\infty$.
\section{Session Hopping}

\begin{figure}[t]
  \centering
  \begin{minipage}[c]{0.63\linewidth}
    \includegraphics[width=\linewidth]{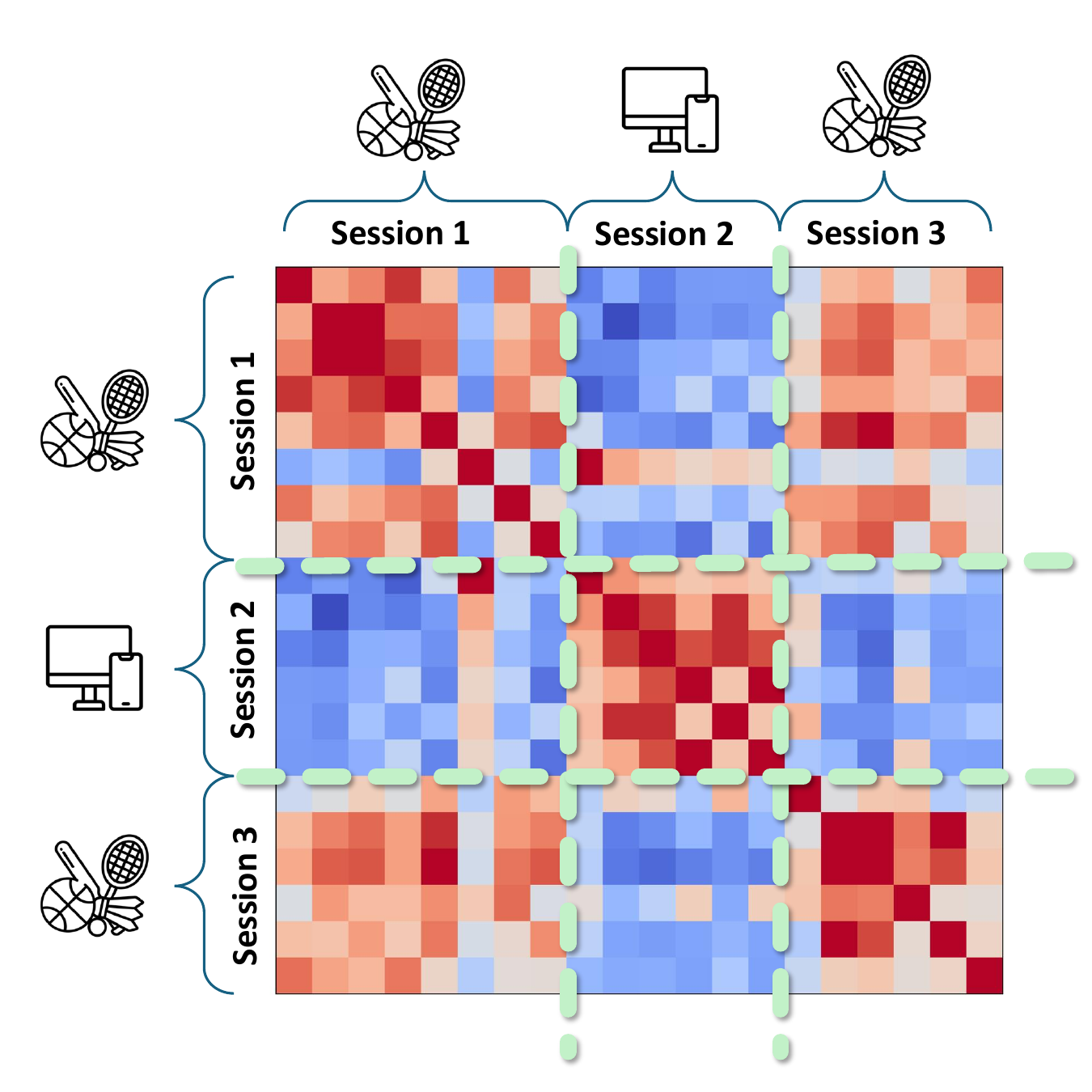}
  \end{minipage}\hfill
  \begin{minipage}[c]{0.35\linewidth}
    \captionof{figure}{Heatmap of the self-similarity matrix for a representative user transaction history.
    Red indicates \textcolor{red}{high similarity}, blue indicates \textcolor{blue}{low similarity}, and green lines denote session boundaries.}
    \label{fig:autocorrelation}
  \end{minipage}
  \vspace{-4mm}
\end{figure}

To better understand the challenges of long-sequence modeling, we conduct a case study on real-world datasets. Specifically, we compute the self-similarity matrix of user interaction history using cosine similarity.
A representative heatmap of user behavior is shown in Figure~\ref{fig:autocorrelation}, with more examples provided in Appendix \ref{appdix:session_examples}.
From Figure~\ref{fig:autocorrelation}, we can clearly identify a distinctive user behavior pattern, which we term the \textit{session hopping} phenomenon. This phenomenon is characterized by three properties: (1) \textbf{Stability.} User interests remain highly consistent and stable within a short temporal span (a session), as indicated by the red area of Figure~\ref{fig:autocorrelation}; (2) \textbf{Discontinuity.} User interests undergo abrupt and discontinuous shifts across sessions, reflected in the sharp red–blue boundary in Figure~\ref{fig:autocorrelation}; (3) \textbf{Reappearance.} User interests may reappear after several sessions, as indicated by the strong similarity between the first and third sessions, both highlighted in red. This pattern closely aligns with everyday user behavior. For instance, when purchasing a computer, users often buy related accessories such as a mouse or keyboard within a short period of time. Once these needs are fulfilled, however, their interest in electronics typically drops, and attention shifts abruptly to unrelated categories. After some time, users may once again develop interest in electronics, such as upgrading to a better mouse. A clear illustration is provided in Figure \ref{fig:session hopping}. In summary, we formalize this phenomenon as follows:

\begin{phe}
    (\textbf{Session Hopping}) User interests remain highly consistent within a session but differ substantially across adjacent sessions. Moreover, similar user interests may reappear across non-adjacent sessions.
\end{phe}

This phenomenon illustrates both the opportunities and challenges of long-sequence recommendation. On the one hand, the reappearance property of session hopping suggests that certain user behavior patterns tend to recur over time. These recurring patterns provide valuable cues for predicting the next item, and compared with short-sequence settings, long-sequence data offer richer opportunities to observe such recurring sessions, thereby holding substantial potential for improving predictive accuracy. On the other hand, the discontinuity inherent in session hopping introduces significant modeling challenges. Mainstream sequential recommendation models, such as RNN-based \cite{rnn1}, Mamba-based \cite{mamba2,liu2024mamba4rec}, and Transformer-based architectures \cite{xia2023transact,sdim,chang2023twin}, rely on continuous functions during forward propagation and thus lack mechanisms explicitly designed to capture abrupt, non-continuous shifts in user interests. When fitting such discontinuous signals with models designed smooth dynamics \cite{castin2023smooth,nishikawa2024state}, approximation errors naturally arise. In recommendation scenarios, when a sudden shift in user interest occurs, the content of preceding sessions may differ drastically from the current intent. As a result, earlier sessions not only fail to provide useful information but may even mislead the prediction. Taking Transformer-based models as an example, although attention mechanisms can partially down-weight irrelevant interactions, it is practically impossible for them to assign zero attention weights to all unrelated sessions preceding a shift. Consequently, the model is highly likely to attend to misleading contexts, thereby incorporating harmful information into recommendation decisions. In summary, current sequential recommendation models face a fundamental trade-off: leveraging as much informative historical context as possible while mitigating the adverse influence of misleading information in long sequences.

\vspace{-2mm}
\section{Methodology}

\begin{figure*}
    \centering
    \includegraphics[width=0.98\linewidth]{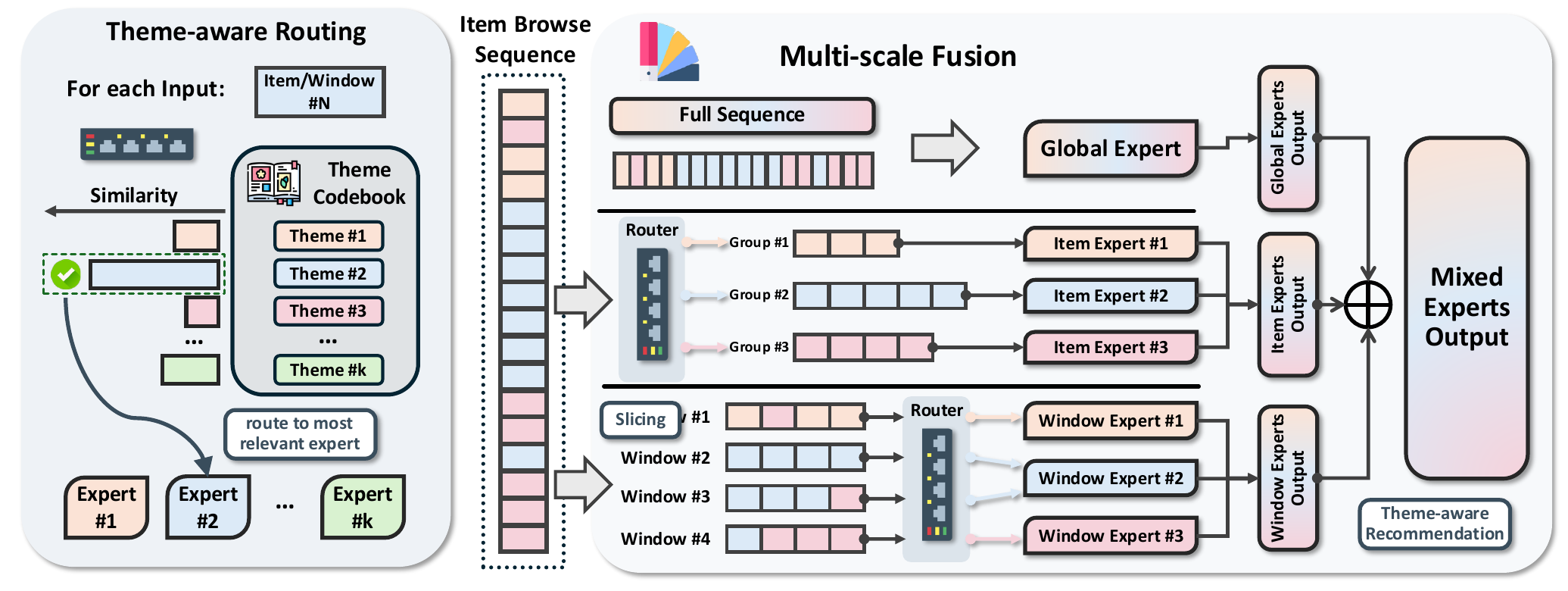}
    \vspace{-2mm}
    \caption{The pipeline of \mtd{}. The left panel illustrates theme-aware routing, which assigns inputs to experts according to theme vectors in the codebook. The right panel shows multi-scale fusion, which models user behaviors by extracting subsequences at different granularities from the full sequence.}
    \label{fig:pipeline}
    \vspace{-2mm}
\end{figure*}

To address the problem introduced by session hopping, we propose \mtd{} framework, which adaptively selects highly coherent subsequences at multiple scales to provide diverse perspectives for accurate prediction. \mtd{} is built upon two key mechanisms: the theme-aware routing mechanism in Section \ref{sec:theme_routing} and the multi-scale fusion mechanism in Section \ref{sec:multi-scale}. Furthermore, we discuss the training paradigm of \mtd{} in Section~\ref{sec:training} and analyze its computational efficiency in Section~\ref{sec:efficiency}. The pipeline of \mtd{} is shown in Figure \ref{fig:pipeline}.

\vspace{-1mm}
\subsection{Theme-aware Routing}\label{sec:theme_routing}

The session hopping phenomenon introduces substantial difficulties for long-sequence recommendation, and existing approaches struggle to precisely capture such user behavior patterns. However, we propose that a simple yet elegant solution is \textit{subsequence extraction}. Since user interests remain relatively stable within a session and similar sessions often reappear, these sessions reflect a particular user interest or a combination of closely related interests. If we define such interests as a special “\textit{theme}” (e.g., sports), then these sessions can be regarded as a subsequence aligned with that theme. Furthermore, if a long user sequence can be ideally decomposed into multiple subsequences according to themes, then temporally adjacent but semantically irrelevant sessions are automatically separated into different sequences, thereby effectively reducing the disruptive impact of sudden interest shifts.

The idea of subsequence extraction naturally aligns with the MoE framework. By assigning individual experts to model each subsequence, every expert is able to capture long-range dependencies across sessions in a theme-aware subsequence while effectively ignoring irrelevant information.

\textbf{Model architecture.} 
Based on the above intuition, we employ a router to adaptively extract multiple subsequences, each of which is then processed by a dedicated expert. Formally, given a router $G(\cdot)$ and a user sequence $\mathcal{S}$, the subsequence assigned to expert $i$ is defined as $\mathcal{S}^{(i)}:= (v | G(\vx_{v})[i]> 0, v \in \mathcal{S})$ with $\vx_v$ being the embedding of the item $v$. This dispatch process needs to satisfy two key properties: (1) \textbf{Sparsity.} To avoid the influence of misleading sessions, each expert should only have access to the corresponding subsequence instead of the entire sequence. Hence, a sparse MoE is leveraged by \mtd{} rather than a dense MoE. (2) \textbf{Cohesion.} The subsequence assigned to an expert should be semantically coherent and aligned with a consistent theme, which imposes strong requirements on the router design.
However, empirical evidence shows that conventional routers (e.g., MLPs or linear layers) fail to fulfill this requirement, as they tend to distribute items uniformly across experts rather than making theme-aware assignments. To address this limitation, we propose the theme-aware routing mechanism, which maintains a theme codebook $\mW \in \mathbb{R}^{n\times D}$ with $D$ being the dimension of the codebook and $n$ being the number of experts. Here, the $i$-th row of the codebook describes the theme feature associated with the $i$-th expert. Given an item embedding $\vx$, MoS first projects $\vx$ into the theme space via a learnable MLP $h(\cdot)$, and then computes its cosine similarity with each entry of the codebook to obtain the relevance scores. Formally, the scoring network $H(\cdot)$ of the theme-aware router can be expressed as:
\begin{equation}\label{eq:theme_routing}
H(\vx) = \left(\Vert h(\vx) \Vert_2\sqrt{\operatorname{diag}(\mW \mW^T)} \right)^{-1} \mW h(\vx)
\end{equation}
where $\operatorname{diag}(\cdot)$ preserves only diagonal elements while setting other elements to zero. Mathematically, \eqref{eq:theme_routing} could easily ensure the following proposition with its proof provided in Appendix \ref{appdix:dispatch_proof}.
\begin{prop}\label{prop:theme_dispatch}
    (\textbf{Theme-aware Dispatch}) A subsequence with high internal similarity will be consistently routed to the same expert.
\end{prop}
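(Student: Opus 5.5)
The plan is to make the informal statement precise and then prove it with a Lipschitz-type argument on cosine similarity. Write $\vu := h(\vx)/\Vert h(\vx)\Vert_2$ and $\vw_i := \mW[i]/\Vert \mW[i]\Vert_2$. By \eqref{eq:theme_routing}, the score vector is $H(\vx)[i] = \langle \vw_i, \vu\rangle$, because $\sqrt{\operatorname{diag}(\mW\mW^T)}$ is the diagonal matrix of row norms of $\mW$. So routing depends only on the direction $\vu$ of the projected embedding, and each score lies in $[-1,1]$.

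\textbf{Internal similarity.} I will call a subsequence $\mathcal{S}'$ highly internally similar at level $\epsilon$ if, for all $v, v' \in \mathcal{S}'$, the projected embeddings satisfy $\cos(h(\vx_v), h(\vx_{v'})) \ge 1-\epsilon$. Equivalently, $\Vert \vu_v - \vu_{v'}\Vert_2 \le \sqrt{2\epsilon}$.

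\textbf{Key estimate.} By Cauchy--Schwarz, for every expert $i$,
\begin{equation*}
\vert H(\vx_v)[i] - H(\vx_{v'})[i]\vert = \vert\langle \vw_i, \vu_v - \vu_{v'}\rangle\vert \le \sqrt{2\epsilon}.
\end{equation*}
So the map $\vx \mapsto H(\vx)$ is $1$-Lipschitz in the angular distance of $h(\vx)$, uniformly over experts and independently of the norms of the codebook rows. This uniformity is the whole purpose of the normalization in \eqref{eq:theme_routing}. An unnormalized linear router $\mW h(\vx)$ would have Lipschitz constant $\max_i \Vert\mW[i]\Vert_2\,\Vert h(\vx)\Vert_2$, and no such guarantee would hold.

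\textbf{Stability of top-$k$.} Next I turn the score bound into a statement about $\KeepTopK$. Fix an anchor item $v_0\in\mathcal{S}'$. Let $\delta$ be its top-$k$ margin, meaning the gap between its $k$-th and $(k+1)$-th largest scores. Every other item in $\mathcal{S}'$ has all scores within $\sqrt{2\epsilon}$ of the anchor's. Hence if $2\sqrt{2\epsilon} < \delta$, every selected expert of $v_0$ still beats every unselected one. Therefore every item of $\mathcal{S}'$ has the same top-$k$ set, and $G(\vx_v)[i] > 0 \Leftrightarrow G(\vx_{v_0})[i] > 0$. By the definition of $\mathcal{S}^{(i)}$, the whole subsequence $\mathcal{S}'$ is then contained in the same $\mathcal{S}^{(i)}$ for each selected $i$, which is the claim. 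The softmax weights also differ by at most $O(\sqrt{\epsilon})$, so the dispatch is consistent in weights as well, not only in support.

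\textbf{Main obstacle.} The hard part is the margin condition: when scores are near-tied, any nontrivial perturbation can flip the selection. I expect the right formulation to be the quantitative one: consistent routing holds whenever $\epsilon < \delta^2/8$. I would also remark that this degeneracy affects only a measure-zero set of directions $\vu$, so in the generic case there is some $\epsilon>0$ that works. Finally, if the paper prefers similarity of the raw embeddings $\vx$ rather than of $h(\vx)$, I would add an assumption that $h$ is Lipschitz and bounded away from zero on the relevant items. That assumption transfers closeness of $\vx$ to angular closeness of $h(\vx)$, and the rest of the argument is unchanged.
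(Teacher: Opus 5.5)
Your proposal is correct and matches the paper's proof essentially step for step. The paper also reduces the scores to cosine similarities $\va_i^\top\vu$, uses that they are $1$-Lipschitz in $\vu$ with $\|\vu_1-\vu_2\|_2=\sqrt{2-2\rho}$, and shows that a top-$k$ margin larger than twice this distance preserves the selected set, arriving at the same threshold (your $\epsilon<\delta^2/8$ is exactly the paper's $\delta<\gamma_k(\vx_1)^2/8$, with the letters swapped).

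Your anchor-based extension from two items to a whole subsequence is a slightly cleaner version of the paper's closing remark. Your side claim that near-ties occur only on a measure-zero set of directions additionally needs the normalized codebook rows to be pairwise distinct.
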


\textbf{Optimization.} Although Proposition \ref{prop:theme_dispatch} theoretically guarantees the feasibility of the router, in practice the codebook $\mW$ cannot be trained by gradient updates under joint optimization. This is because the codebook and the learnable MLP tend to collapse into a single MLP due to linearity during joint training, thereby undermining the intended role of the codebook. To address this problem, inspired by VQ-VAE \cite{vqvae}, we update the codebook using an exponential moving average (EMA):
\vspace{-1mm}
\begin{equation}
    \mW^{(t)}_i = \gamma \mW^{(t-1)}_i + (1-\gamma) \frac{\sum_{\vx \in \mathcal{B}} \mathds{1}\left(G(\vx)[i] > 0\right) h(\vx) }{\sum_{\vx \in \mathcal{B}} \mathds{1}\left(G(\vx)[i] > 0\right)} 
\end{equation}
where $\mathcal{B}$ is a batch sampled from the dataset. In this way, although the item embeddings evolve throughout training, each entry of the codebook adaptively tracks the centroid of item embeddings under its corresponding theme, thereby maintaining stable and meaningful theme representations.

\textbf{Initialization.} A notable limitation of EMA is that if two entries are initialized too closely, they tend to collapse toward each other during the updating process, thereby reducing the diversity of the codebook. To mitigate this issue, it is essential to adopt an initialization strategy that ensures sufficient differentiation among themes. Specifically, we first estimate the distribution of item embeddings using a pretrained embedding layer, and then apply $k$-means clustering to obtain $k$ cluster centroids, which are subsequently used as the initial values of rows in the codebook.

\subsection{Multi-scale Fusion}\label{sec:multi-scale}

A notable limitation of theme-aware routing is that subsequence extraction disrupts temporal dependencies. As a result, information about users’ evolving preferences is largely ignored, and the model becomes biased toward recommending items that shows high semantic similarity rather than capturing dynamic behavioral shifts. This issue arises because the extraction process focuses solely on semantic coherence while neglecting temporal continuity. To remedy this, we introduce a multi-scale fusion mechanism that leverages a global expert, item experts, and window experts to capture temporal information at different granularities.

\textbf{Global Expert.} The objective of the global expert is to effectively extract global features from the sequence, thereby capturing users’ globally drifted interests and modeling their evolving preferences. To this end, the global expert directly takes the entire historical sequence as input, without relying on subsequence extraction. When only the global expert is activated, the model naturally degenerates into the non-MoE backbone. Concretely, given any model with a repeated block structure, if we treat one block $F(\cdot)$ as an expert, the output of the global expert $\vy_g$ can be expressed as:
\vspace{-1mm}
\begin{equation}
    \vy_G = F \left(\mathcal{X}\left(\mathcal{S}\right)\right)
\end{equation}
where $\mathcal{S}:=(v_1,\dots, v_t)$ denotes a user historical sequence and $\mathcal{X}(\mathcal{S}):=(\vx_{v_1}, \dots, \vx_{v_t})$ denotes the corresponding embedding sequence with $\vx_{v_i}$ being the embedding of the item $v_i$.

\textbf{Item Experts.} The goal of the item experts is to deliver fine-grained and accurate recommendations by decomposing user behaviors into multiple latent themes and analyzing the theme-aware behavioral patterns. To this end, leveraging theme-aware routing mechanism, the item router $G_I(\cdot)$ first organizes the historical sequence $\mathcal{S}$ into a set of thematically coherent subsequences $\mathcal{S}_{I}^{(i)} := (v | G_I(\vx_{v})[i]> 0, v \in \mathcal{S})$. Then each item expert $F_I^{(i)}(\cdot)$ independently analyzes its subsequence to capture user behaviors from a specific theme. Finally, the outputs of all item experts are aggregated to form multi-faceted recommendation results. Mathematically, this process could be expressed as:
\begin{equation}
    \vy_I = \sum_{i=1}^n G_I \left(\vx_{v_t}\right)\left[i\right] \cdot F_I^{(i)}\left(\mathcal{X}\left(\mathcal{S}_I^{(i)}\right)\right)
\end{equation}
where $\vx_{v_t}$ represents the embedding of the last item interacted with by the user $u$ and $\vy_I$ is the aggregated output by item experts.

\textbf{Window Experts.} Positioned between the global expert and the item experts, window experts jointly account for both semantic coherence and temporal continuity. These experts aim to capture short-term dynamics of user preferences while preserving theme-specific behavioral patterns. Concretely, we first transform the user’s historical item sequence into a historical window sequence using a sliding window of size $L$ and stride $s$. Mathematically, each window is defined as $w_m = (v_{s \cdot m+1}, \dots, v_{s\cdot m+L})$, and the corresponding window embeddings is then computed as the average of the embeddings of the items it contains, i.e., $\vx_{w_m} = \frac{1}{L} \sum_{j=s\cdot m+1}^{s\cdot m+L} \vx_{v_j}$. Then, given this transformed window sequence $\mathcal{S}_W = (w_1, \dots, w_{\Tilde{t}})$, similarly, a window router $G^W(\cdot)$ applies the theme-aware routing mechanism to extract several theme-aware windows subsequences, $\mathcal{S}_W^{(i)} := (w | G_W(\vx_{w})[i]> 0, w \in \mathcal{S}_W)$, with each subsequence $\mathcal{S}_W^{(i)}$ assigned to a window expert $F_W^{(i)}(\cdot)$. Finally, their outputs are aggregated to obtain the final prediction:
\begin{equation}
    \vy_W = \sum_{i=1}^n G_W\left(\vx_{w_{\tilde{t}}}\right)\left[i\right] \cdot F_W^{(i)}\left(\mathcal{X}\left(\mathcal{S}_W^{(i)}\right) \right)
\end{equation}

\textbf{Multi-scale Fusion.} 
By integrating the outputs of the global, item, and window experts, MoS employs a multi-scale fusion strategy that effectively balances semantic coherence and temporal continuity in recommendation. When applied to sequential data exhibiting the session hopping phenomenon, MoS provides an elegant solution that enables accurate recommendations across multiple thematic perspectives and temporal granularities. Concretely, the final output of MoS is obtained as a weighted combination of the three expert groups:
\begin{equation}
\vy = (1 - \alpha_I - \alpha_W)\vy_{G} + \alpha_I \vy_{I} + \alpha_W \vy_{W},
\end{equation}
where $\alpha_I$ and $\alpha_W$ are two hyperparameters that control the relative importance of item and window experts in the final prediction.

\subsection{Training Paradigm} \label{sec:training}

Beyond model architecture design, a dedicated training paradigm is essential to ensure stable and effective optimization, particularly given the high training complexity of MoE models. To this end, we adopt a staged progressive training paradigm, which consists of three phases: backbone warm-up, theme-aware expert warm-up, and joint optimization.

\textbf{Backbone Warm-up.}
An insightful observation in our model design is that when only the global expert is activated, the entire architecture degenerates into the backbone model without MoE. This implies that training solely the global expert substantially reduces the optimization difficulty, while guaranteeing performance at least comparable to the backbone. Motivated by this, in the backbone warm-up stage, \mtd{} trains only the backbone components, including the global expert, embedding layer, and classifier head. Meanwhile, all the output from theme-aware experts are ignored, i.e., $\alpha_I = \alpha_W =0$.

\textbf{Theme-aware Expert Warm-up.}
In this stage, we update only the parameters of the item experts and window experts, while freezing the embedding layer and classifier head, and ignoring the output of the global expert, i.e., $\alpha_I = \alpha_W = 0.5$. Since the embedding layer and classifier head have already been well warmed up, this stage forces the item and window experts to align their outputs with those of the global expert, hence providing a strong initialization for the theme-aware experts.

\textbf{Joint Optimization.} In the final stage, we jointly fine-tune the entire model, updating all parameters based on the strong initialization provided by the previous phases. Here, $\alpha_I$ and $\alpha_W$ are treated as tunable hyperparameters to balance the contributions of item and window experts. This stage ensures global consistency across all components and maximizes the overall performance of the model.

\vspace{-1mm}
\subsection{Efficiency Analysis} \label{sec:efficiency}

As an MoE-based method, \mtd{} not only achieves strong effectiveness but also introduces only a marginal increase in computational overhead, thereby ensuring efficiency. Given the prevalence of Transformer architectures, we adopt Transformer blocks as an example. Suppose \mtd{} activates the top-$k_1$ experts out of $n_1$ item experts and the top-$k_2$ experts out of $n_2$ window experts. In this case, each item expert is responsible for approximately $O(\frac{N}{n_1})$ data points, while each window expert processes about $O(\frac{N}{s \cdot n_2})$ data points due to the slicing window, where $N$ is the sequence length and $s$ is the stride. Consequently, compared with the backbone, the additional time complexity introduced by MoS is $O(\frac{k_1^2 N^2}{n_1} + \frac{k_2^2 N^2}{s^2\cdot n_2})$. Under the same conditions, this complexity is substantially lower than that of other MoE methods with a shared expert \cite{dai2024deepseekmoe, lepikhin2020gshard,hazimeh2021dselect,zhou2022expertchoice}, whose complexity is $O(\frac{(k_1 + k_2)^2 N^2}{(n_1 + n_2)})$. For instance, in our experimental setup with $k_1 = k_2$, $n_1 = n_2$, and $s=4$, the theoretical time complexity of \mtd{} is only about 53\% of that of conventional MoE counterparts, highlighting its high efficiency.

\vspace{-2mm}
\section{Experiments}
\subsection{Experiment Settings} \label{subsec:exp_setting}

\paragraph{Dataset descriptions.}
We evaluate \mtd{} on three real-world datasets: MicroVideo~\cite{microvideo}, KuaiVideo~\cite{kuaivideo}, and EBNeRD-Small~\cite{ebnerd}. The MicroVideo dataset focuses on short-video recommendation with multimodal image embeddings in the entertainment domain, KuaiVideo originates from the Kuaishou competition and models  user–video interactions on online video platforms, and EBNeRD-Small targets personalized news recommendation in the digital publishing domain. Detailed dataset descriptions are provided in Appendix~\ref{appdx:dataset_desc}.

\paragraph{Evaluation metrics.} We evaluate model performance by ranking predictions over the entire item set without negative sampling. To measure utility, we adopt AUC and GAUC, while FLOPs are used to assess efficiency. Detailed definitions of these metrics are provided in Appendix~\ref{appdix:metric_desc}. For AUC and GAUC, higher values indicate better performance, whereas for FLOPs, lower values are preferred.

\paragraph{Models.} We demonstrate the superiority of MoS from two perspectives: routing strategy and multi-scale fusion. From the routing perspective, we compare MoS with three MoE baselines, namely GShard \cite{lepikhin2020gshard}, DSelect-k \cite{hazimeh2021dselect}, and Expert Choice Routing \cite{zhou2022expertchoice}, and leverage them on four different sequence recommendation models as backbones: Mamba4Rec \cite{liu2024mamba4rec}, TransAct \cite{xia2023transact}, TWIN \cite{chang2023twin}, and SDIM \cite{sdim}. These backbones cover both Transformer-based and Mamba-based architectures. From the multi-scale fusion perspective, we further compare MoS with long-sequence recommendation models that adopt similar fusion designs, including MIRRN \cite{xu2025mirrn}, AttenMixer \cite{atten-mixer}, and MiasRec \cite{MiasRec}.

\paragraph{Parameter Settings.}  Unless otherwise specified, we follow the default hyperparameter settings provided in the released code of the BARS Benchmark \cite{zhu2022bars}\footnote{\url{https://github.com/reczoo/LongCTR}}. We adopt the Adam optimizer~\cite{kingma2014adam} to train all models, with the initial learning rate tuned from $\{1\text{e-}4, 5\text{e-}4, 1\text{e-}3\}$. For MoS, the codebook dimension is searched from $\{16, 32, 64\}$, and a 2-layer MLP is used as $h(\cdot)$. The EMA update weight $\gamma$ is set to 0.999. For the window experts, we set the stride to 4 and the window size to 8. The multi-scale fusion weights are fixed at $\alpha_I = \alpha_W = 0.25$. For all MoE routers, the number of experts is set to 5, and the number of experts selected by the router $k$ is tuned from $\{1, 2\}$. All experiments are conducted on NVIDIA A100 GPUs.

\subsection{Experimental Results}

\begin{table*}[htbp]
\centering
\caption{Main evaluation of model utility on sequential recommendation. Higher AUC and GAUC values indicate better performance. Utility performance as well as the average ranking are reported on four sequential recommendation backbones, comparing MoS with three MoE baselines and the vanilla backbone (without MoE enhancement). Red font highlights \textcolor[HTML]{C00000}{the best performance}, while blue font denotes \textcolor[HTML]{3D74B6}{the second best}. Numbers in parentheses indicate the performance gain brought by MoE.}
\vspace{-2mm}
\resizebox{0.9\linewidth}{!}{
\begin{tabular}{cc|cccccc|cc}
\toprule
\multicolumn{2}{c|}{Dataset} & \multicolumn{2}{c}{MicroVideo} & \multicolumn{2}{c}{KuaiVideo} & \multicolumn{2}{c|}{Ebnerd} & \multicolumn{2}{c}{Rank} \\
\multicolumn{1}{l}{Backbone} & \multicolumn{1}{l|}{Method} & AUC & GAUC & AUC & GAUC & AUC & GAUC & AUC & GAUC \\
\midrule
    \multirow{5}{*}{\rotatebox{90}{Mamba4Rec}} & Vanilla & 66.55 & \textcolor[HTML]{3D74B6}{69.16} & 66.27 & \textcolor[HTML]{3D74B6}{66.26} & 63.08 & 62.68 & 4.00 & 2.67 \\
     & DSelect-k & \textcolor[HTML]{3D74B6}{68.40\tiny(+1.85)} & 68.94\tiny(-0.22) & \textcolor[HTML]{3D74B6}{66.48\tiny(+0.21)} & 66.22\tiny(-0.04) & \textcolor[HTML]{3D74B6}{63.59\tiny(+0.51)} & 62.50\tiny(-0.18) & 2.00 & 4.00 \\
     & GShard & 66.05\tiny(-0.50) & 68.15\tiny(-1.01) & 65.89\tiny(-0.38) & 65.74\tiny(-0.52) & 63.26\tiny(+0.18) & 63.79\tiny(+1.11) & 4.67 & 4.33 \\
     & Expert & 66.07\tiny(-0.48) & 69.04\tiny(-0.12) & 66.29\tiny(+0.02) & 66.06\tiny(-0.20) & 63.56\tiny(+0.48) & \textcolor[HTML]{3D74B6}{64.18\tiny(+1.50)} & 3.33 & 3.00 \\
     & \mtd{} & \textcolor[HTML]{C00000}{69.46\tiny(+2.91)} & \textcolor[HTML]{C00000}{70.01\tiny(+0.85)} & \textcolor[HTML]{C00000}{66.67\tiny(+0.40)} & \textcolor[HTML]{C00000}{66.67\tiny(+0.41)} & \textcolor[HTML]{C00000}{64.07\tiny(+0.99)} & \textcolor[HTML]{C00000}{64.54\tiny(+1.86)} & \textbf{1.00} & \textbf{1.00} \\
\midrule
    \multirow{5}{*}{\rotatebox{90}{TransAct}} & Vanilla & 70.58 & 69.90 & 67.67 & \textcolor[HTML]{3D74B6}{65.92} & \textcolor[HTML]{3D74B6}{70.89} & 70.25 & 3.33 & 3.67 \\
     & DSelect-k & 70.67\tiny(+0.09) & 69.92\tiny(+0.02) & \textcolor[HTML]{3D74B6}{67.87\tiny(+0.20)} & 65.38\tiny(-0.54) & 70.68\tiny(-0.21) & \textcolor[HTML]{3D74B6}{70.30\tiny(+0.05)} & 3.67 & 3.33 \\
     & GShard & 70.70\tiny(+0.12) & 70.02\tiny(+0.12) & 67.62\tiny(-0.05) & 65.42\tiny(-0.50) & 70.69\tiny(-0.20) & 70.08\tiny(-0.17) & 3.67 & 3.67 \\
     & Expert & \textcolor[HTML]{3D74B6}{70.85\tiny(+0.27)} & \textcolor[HTML]{3D74B6}{70.24\tiny(+0.34)} & 67.52\tiny(-0.15) & 65.28\tiny(-0.64) & 70.87\tiny(-0.02) & 70.27\tiny(+0.02) & 3.33 & 3.33 \\
     & \mtd{} & \textcolor[HTML]{C00000}{71.34\tiny(+0.76)} & \textcolor[HTML]{C00000}{70.51\tiny(+0.61)} & \textcolor[HTML]{C00000}{67.90\tiny(+0.23)} & \textcolor[HTML]{C00000}{68.10\tiny(+2.18)} & \textcolor[HTML]{C00000}{71.06\tiny(+0.17)} & \textcolor[HTML]{C00000}{70.71\tiny(+0.46)} & \textbf{1.00} & \textbf{1.00} \\
\midrule
    \multirow{5}{*}{\rotatebox{90}{TWIN}} & Vanilla & \textcolor[HTML]{3D74B6}{70.34} & 69.67 & \textcolor[HTML]{3D74B6}{69.36} & 66.60 & 69.85 & 69.54 & 2.33 & 3.67 \\
     & DSelect-k & 70.02\tiny(-0.32) & \textcolor[HTML]{3D74B6}{70.10\tiny(+0.43)} & 68.96\tiny(-0.40) & \textcolor[HTML]{3D74B6}{66.64\tiny(+0.04)} & 69.40\tiny(-0.45) & 68.99\tiny(-0.55) & 4.00 & 3.00 \\
     & GShard & 70.13\tiny(-0.21) & 69.74\tiny(+0.07) & 68.84\tiny(-0.52) & 66.62\tiny(+0.02) & \textcolor[HTML]{3D74B6}{70.10\tiny(+0.25)} & \textcolor[HTML]{3D74B6}{69.58\tiny(+0.04)} & 3.33 & 2.67 \\
     & Expert & 69.90\tiny(-0.44) & 69.56\tiny(-0.11) & 68.96\tiny(-0.40) & 66.34\tiny(-0.26) & 69.80\tiny(-0.05) & 69.33\tiny(-0.21) & 4.00 & 4.67 \\
     & \mtd{} & \textcolor[HTML]{C00000}{71.11\tiny(+0.77)} & \textcolor[HTML]{C00000}{70.26\tiny(+0.59)} & \textcolor[HTML]{C00000}{69.62\tiny(+0.26)} & \textcolor[HTML]{C00000}{66.89\tiny(+0.29)} & \textcolor[HTML]{C00000}{70.30\tiny(+0.45)} & \textcolor[HTML]{C00000}{69.81\tiny(+0.27)} & \textbf{1.00} & \textbf{1.00} \\
\midrule
    \multirow{5}{*}{\rotatebox{90}{SDIM}} & Vanilla & 70.50 & 69.62 & 68.56 & 66.61 & \textcolor[HTML]{C00000}{69.63} & 69.03 & 3.00 & 4.00 \\
     & DSelect-k & 69.65\tiny(-0.85) & 69.24\tiny(-0.38) & 68.78\tiny(+0.22) & 66.48\tiny(-0.13) & 69.47\tiny(-0.16) & 69.07\tiny(+0.04) & 4.00 & 4.33 \\
     & GShard & \textcolor[HTML]{3D74B6}{70.87\tiny(+0.37)} & 69.88\tiny(+0.26) & \textcolor[HTML]{3D74B6}{68.90\tiny(+0.34)} & 66.52\tiny(-0.09) & 69.47\tiny(-0.16) & 69.07\tiny(+0.04) & 2.33 & 3.33 \\
     & Expert & 70.30\tiny(-0.20) & \textcolor[HTML]{3D74B6}{70.05\tiny(+0.43)} & 68.85\tiny(+0.29) & \textcolor[HTML]{3D74B6}{66.66\tiny(+0.05)} & 69.40\tiny(-0.23) & \textcolor[HTML]{3D74B6}{69.13\tiny(+0.10)} & 4.00 & 2.00 \\
     & \mtd{} & \textcolor[HTML]{C00000}{71.36\tiny(+0.86)} & \textcolor[HTML]{C00000}{70.23\tiny(+0.61)} & \textcolor[HTML]{C00000}{69.38\tiny(+0.82)} & \textcolor[HTML]{C00000}{66.75\tiny(+0.14)} & \textcolor[HTML]{3D74B6}{69.49\tiny(-0.14)} & \textcolor[HTML]{C00000}{69.40\tiny(+0.37)} & \textbf{1.33} & \textbf{1.00} \\
\midrule
\end{tabular}
}
\label{tab:main_eval}
\end{table*}
\begin{table}[]
\centering
\caption{Performance comparison between \mtd{} and baselines employing similar multi-scale fusion architectures.}
\vspace{-2mm}
\resizebox{\linewidth}{!}{
\begin{tabular}{cc|cccccc}
\toprule
    \multicolumn{2}{c|}{\multirow{2}{*}{Method}} & \multicolumn{2}{c}{MicroVideo} & \multicolumn{2}{c}{KuaiVideo} & \multicolumn{2}{c}{Ebnerd} \\
    \multicolumn{2}{c}{} & AUC & GAUC  & AUC & GAUC  & AUC & GAUC  \\
\midrule
    \multirow{3}{*}{\rotatebox{90}{Baseline}}    & MIRRN        & 67.17      & 68.74 & 67.94     & 65.5  & 69.19  & 69.01 \\
    & MiasRec      & 68.67      & 67.83 & 65.05     & 64.70  & 55.95  & 55.45 \\
    & AttenMixer   & 68.97      & 68.89 & 66.95     & 64.18 & 62.37  & 62.29 \\
\midrule
    \multirow{4}{*}{\rotatebox{90}{\mtd{}}}         & TWIN         & 70.83      & \textcolor[HTML]{3D74B6}{70.26} & \textcolor[HTML]{C00000}{69.62}     & \textcolor[HTML]{3D74B6}{66.89} & 66.89  & \textcolor[HTML]{3D74B6}{69.81} \\
     & SDIM         & \textcolor[HTML]{C00000}{71.36}      & 70.23 & \textcolor[HTML]{3D74B6}{69.38}     & 66.75 & \textcolor[HTML]{3D74B6}{69.49}  & 69.49 \\
     & TransAct     & \textcolor[HTML]{3D74B6}{71.34}      & \textcolor[HTML]{C00000}{70.51} & 67.90      & \textcolor[HTML]{C00000}{68.10}  & \textcolor[HTML]{C00000}{71.06}  & \textcolor[HTML]{C00000}{70.71} \\
     & Mamba4Rec    & 69.46      & 70.01 & 66.67     & 66.67 & 64.07  & 64.54 \\
\bottomrule
\end{tabular}
}
\label{tab:compare_fusion}
\vspace{-3mm}
\end{table}

\paragraph{Main results.} 

The primary utility evaluation results are summarized in Table~\ref{tab:main_eval} and Table~\ref{tab:compare_fusion}. (1) Table~\ref{tab:main_eval} compares \mtd{} with other MoE methods, highlighting the superiority of its routing strategy. \textbf{Across diverse backbones, \mtd{} consistently delivers overall performance gains on both AUC and GAUC}, achieving an average improvement of 0.68\% in AUC and 0.72\% in GAUC. Specifically, \textbf{\mtd{} attains the top rank in AUC and GAUC across all datasets}, with average ranks of 1 / 1 / 1 / 1 for AUC and 1 / 1 / 1 / 1.33 for GAUC across the four backbones. In certain cases, the improvement is particularly pronounced. For instance, on the MicroVideo dataset, \mtd{} improves the performance of Mamba4Rec by 2.91\% in AUC and 0.85\% in GAUC.
(2) Table~\ref{tab:compare_fusion} further compares \mtd{} with sequential recommendation models that adopt similar fusion designs. \textbf{Across all datasets, models enhanced with \mtd{} consistently achieve both the first and second highest ranks, maintaining a clear performance margin over the remaining backbones.} On three datasets, the best-performing \mtd{} variants surpass the strongest baseline by 2.39 (1.62), 1.68 (2.60), and 1.87 (1.70) in AUC (GAUC), respectively. These results collectively demonstrate the superiority of \mtd{} in enhancing predictive performance. Importantly, this advantage is not tied to a specific architecture but generalizes robustly across heterogeneous backbones.

\paragraph{Efficiency Analysis}
\begin{figure}[htbp]
    \centering
    \vspace{-2mm}
    \includegraphics[width=\linewidth]{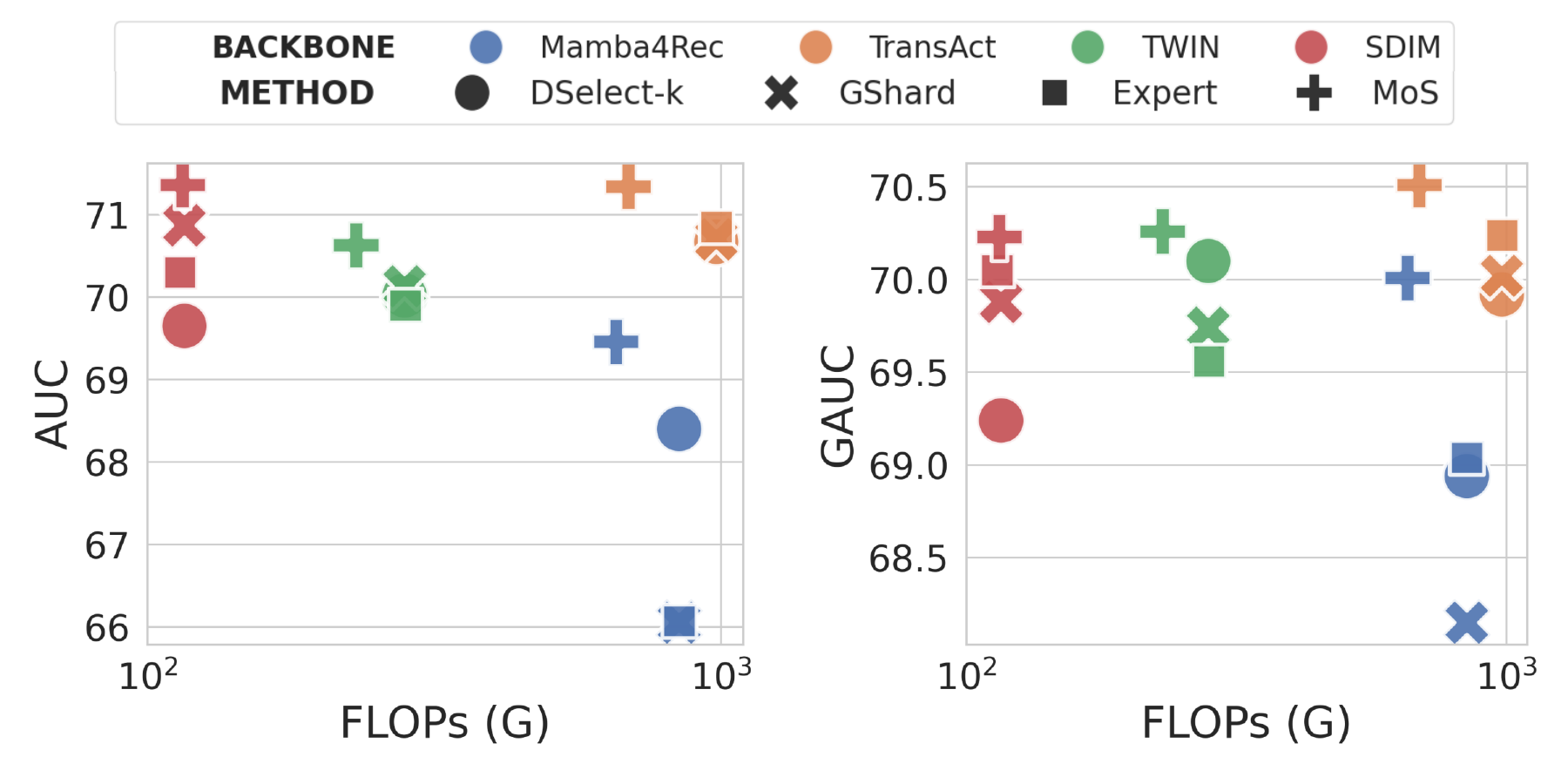}
    \vspace{-8mm}
    \caption{Trade-off between utility and efficiency. \mtd{} achieves superior utility–efficiency balance, appearing closer to the upper-left region for every backbone (color).}
    \label{fig:tradeoff}
    \vspace{-4mm}
\end{figure}

To evaluate the tradeoff between utility and efficiency of MoS, we compare it with all MoE methods using the same number of experts on the MicroVideo dataset across four backbones. The results are presented in Figure \ref{fig:tradeoff}, where colors denote different backbones and shapes correspond to different MoE methods. As shown, MoS achieves a highly competitive tradeoff, with all of its cross markers positioned near the upper-left corner across all backbones. More concretely, \textbf{MoS consistently delivers the best utility (both AUC and GAUC)}, outperforming all baselines on every backbone. At the same time, \textbf{compared with all MoE methods, MoS achieves the lowest FLOPs}, indicating that its subsequence extraction mechanism is highly efficient. These findings collectively demonstrate the superiority of MoS in balancing performance and efficiency.

\begin{figure}
    \centering
    \includegraphics[width=\linewidth]{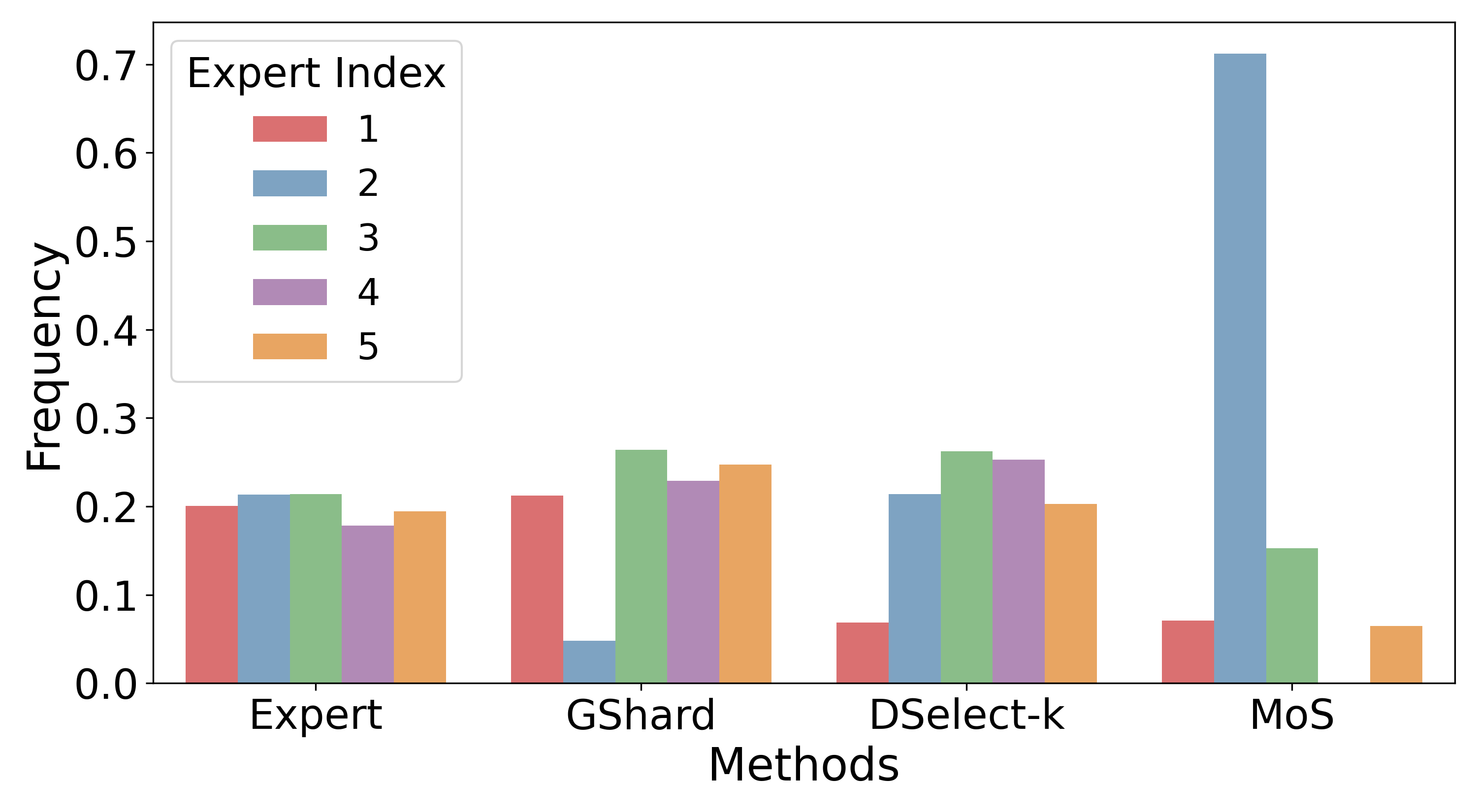}
    \vspace{-7mm}
    \caption{Dispatch behavior of routers for the most popular item. \mtd{} consistently routes the popular item to a fixed expert, indicating strong consistency of theme-aware routing and clear expert specialization compared to other routers.}
    \label{fig:routing}
    \vspace{-6mm}
\end{figure}

\paragraph{Routing Behavior Analysis.} We analyze the dispatch behavior of routers from all MoE variants to understand why \mtd{} consistently achieves superior performance on recommendation datasets. On the MicroVideo dataset, we select the most popular item, i.e., the item most frequently interacted with by all users, and then track how different MoE variants assign it to experts. Using the TransAct backbone, we record the expert selected for this item each time it appears and compute the dispatch frequency for each expert, as shown in Figure~\ref{fig:routing}. The results reveal that \mtd{} consistently routes the popular item to a fixed expert (here, expert \#2), demonstrating strong \textbf{theme-aware consistency and clear expert specialization}. In contrast, other MoE variants tend to distribute the same item almost uniformly across all experts. This uniformity indicates insufficient differentiation among experts, which in turn leads to degraded recommendation performance. More detailed explanation and analyses of routing behaviors are provided in Appendix~\ref{appdix:routing_behavior}.

\subsection{Ablation Studies}

\paragraph{Impact of multi-scale fusion.}
To examine the contributions of global, item, and window experts to prediction performance, we conduct an ablation study by varying $\alpha_I$ and $\alpha_W$ over ${0, 0.2, 0.4, 0.6, 0.8, 1}$. The corresponding AUC and GAUC results are shown in Figure~\ref{fig:alpha_ablation}. From these results, two clear conclusions can be drawn:

\textbf{(1) The multi-scale fusion mechanism effectively enhances recommendation performance.}
As shown in Figure~\ref{fig:alpha_ablation}, the heatmap corresponds to an upper-triangular region constrained by $\alpha_I + \alpha_W \leq 1$, where the vertices represent the use of single type of experts, the edges correspond to pairwise fusion of any two expert types, and the interior region denotes full fusion of all experts. It is evident that \textbf{full fusion consistently achieves the best results, while pairwise fusion outperforms using the single expert}. This observation suggests that although the global expert alone already provides strong representations for recommendations, the three types of experts capture complementary behavioral patterns and offer diverse analytical perspectives, thus leading to consistent performance gains when fused.
\begin{figure}[tbp]
    \centering
    \begin{subfigure}[b]{0.5\linewidth}
        \centering
        \resizebox{\linewidth}{!}{
        \includegraphics{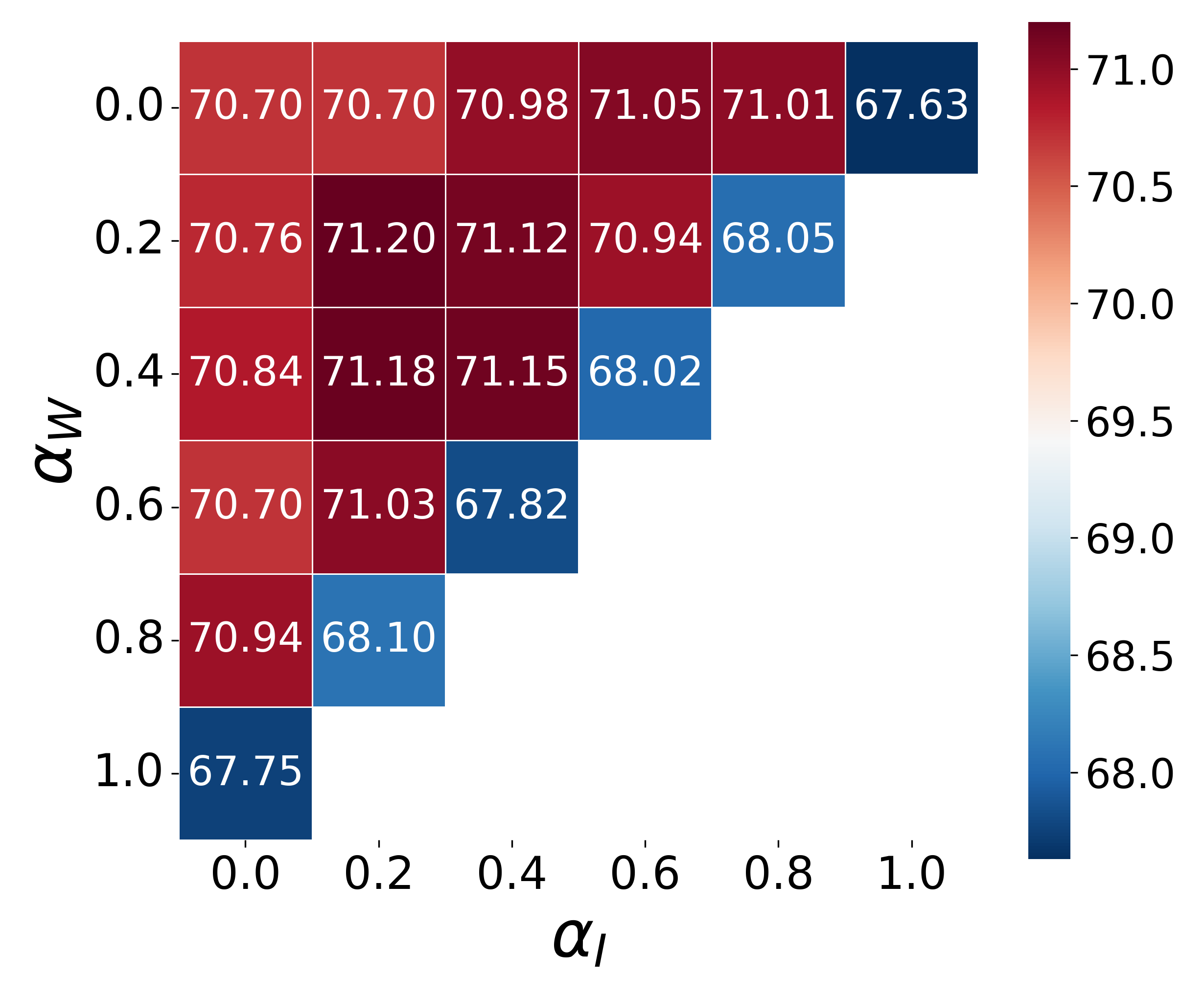}
        }
        \caption{Impact of $\alpha_I$ and $\alpha_W$ on AUC.}
        \vspace{-2mm}
    \end{subfigure}%
    \begin{subfigure}[b]{0.5\linewidth}
        \centering
        \resizebox{\linewidth}{!}{
        \includegraphics{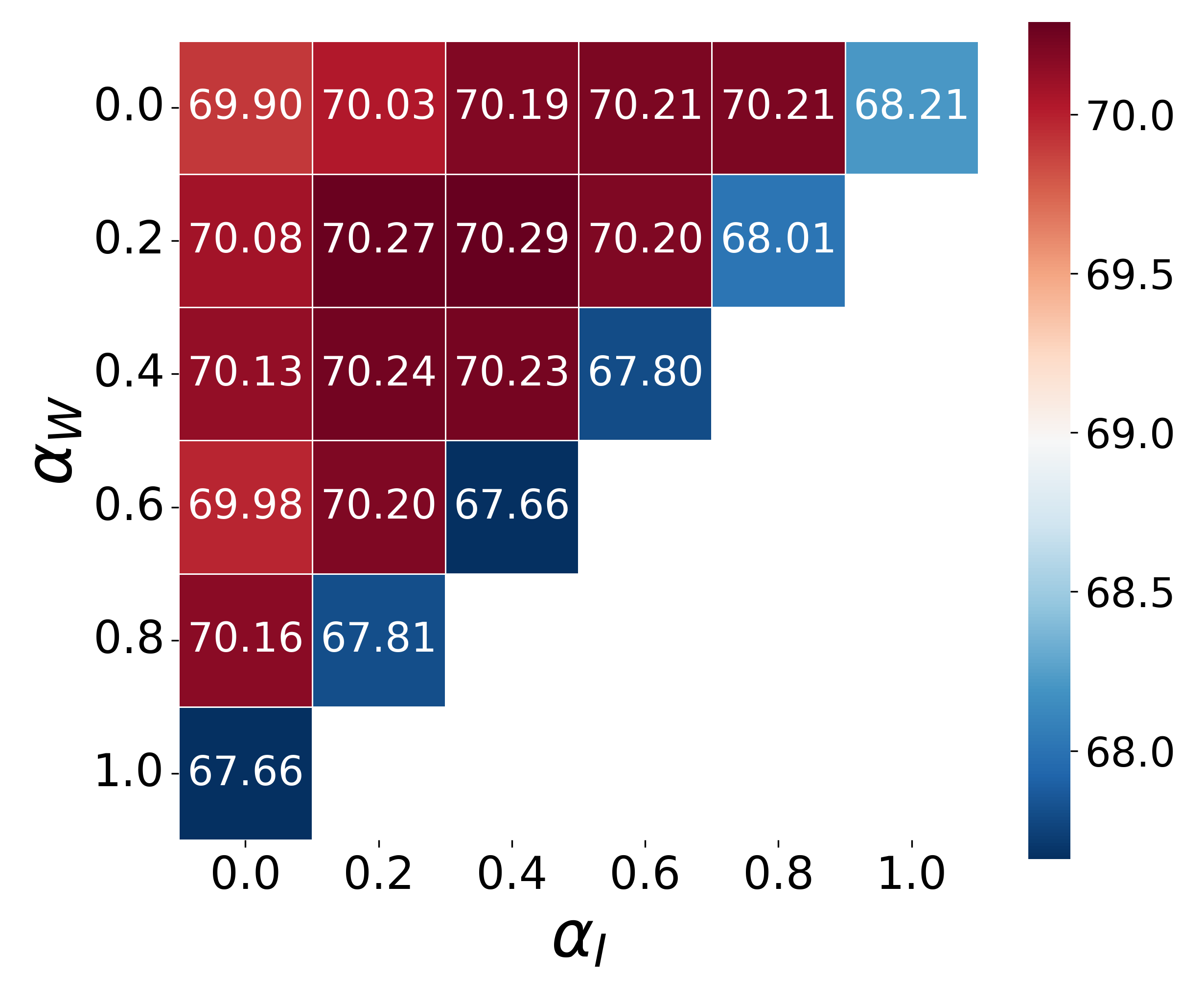}
        }
        \caption{Impact of $\alpha_I$ and $\alpha_W$ on GAUC.}
        \vspace{-2mm}
    \end{subfigure}
    \caption{Impact of $\alpha_I$ and $\alpha_W$ on model utility.}
    \label{fig:alpha_ablation}
    \vspace{-6mm}
\end{figure}

\begin{figure*}[ht]
    \centering
    \includegraphics[width=\linewidth]{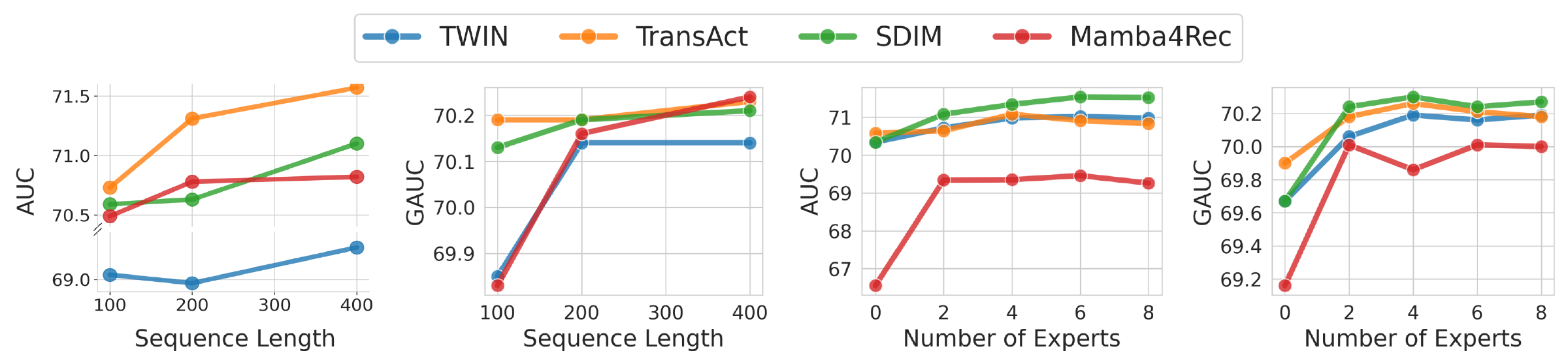}
    \vspace{-6mm}
    \caption{Scaling study of MoS. 
    MoS consistently enhances AUC/GAUC with increased sequence length and number of experts.}
    \vspace{-8pt}
    \label{fig:scaling}
\end{figure*}

\textbf{(2) The multi-scale fusion mechanism exhibits strong robustness.}
When all experts are fused, corresponding to the interior region of the triangle, the model performance remains consistently high without noticeable fluctuations. The AUC (GAUC) varies only within a narrow range from 70.94\% (70.20\%) to 71.20\% (70.27\%). The insensitivity of performance to $\alpha_I$ and $\alpha_W$ demonstrates the strong robustness of the multi-scale fusion mechanism.

\paragraph{Scaling on the sequence length.}
To assess the effectiveness of MoS in long-sequence recommendation, we vary the input sequence length on the MicroVideo dataset from 100 to 400 and set the number of experts as 5. As shown in Figure~\ref{fig:scaling}, both AUC and GAUC increase steadily with longer sequences across all four backbones. This trend confirms that MoS’s strategy of decomposing long sequences into shorter, theme-coherent subsequences effectively alleviates the impact of irrelevant information and training difficulty, resulting in superior and stable performance.

\vspace{-2mm}
\paragraph{Scaling on the number of experts.}
To examine the impact of the number of experts in MoS, we gradually increase the number of experts from 0 to 8 on the MicroVideo dataset with the sequence length of 200. As shown in Figure~\ref{fig:scaling}, model performance, including both AUC and GAUC, improves as the number of experts increases, but the improvement trend slows markedly once the number exceeds four. This phenomenon may be related to the underlying number of latent themes. When the number of experts is small, adding more experts effectively introduces additional themes, which helps to separate unrelated items or sessions and thereby enhances model performance. However, when the number of experts becomes sufficiently large, further increases lead to overly fine-grained distinctions between themes, causing many informative items to be split across different experts. As a result, the effective information within subsequences diminishes, producing performance saturation while increasing computational cost. These findings suggest that an appropriate number of experts yields beneficial effects for MoS.

\vspace{-2mm}
\section{Related Work}
\subsection{Click-Through Rate Prediction}
CTR~\citep{guo2017deepfm,xu2025multi,cheng2016wide} prediction is a fundamental task in online advertising, aiming to estimate the probability that a user clicks on a given item or ad.
A dominant paradigm is feature-interaction-based modeling, which typically follows the embedding–interaction–prediction pipeline. Early works~\citep{rendle2010factorization} capture low-order interactions, while Wide\&Deep~\citep{cheng2016wide} and DeepFM~\citep{guo2017deepfm} extend to higher-order with deep neural networks. More advanced designs include cross networks~\citep{,li2024fcn}, graph-based~\citep{ma2025graph} and attention-based models~\citep{xu2025multi, song2020towards}.

\vspace{-2mm}
\subsection{Sequential Recommendation}

Sequential recommendation~\citep{hidasi2015session, kang2018self} aims to predict the next item a user will interact with based on their historical behavior sequence. Advances in deep learning have driven the development of diverse models that capture sequential dependencies and uncover latent user interest from historical interactions. Representative approaches include Convolutional Neural Networks (CNNs)~\citep{Caser}, Recurrent Neural Networks (RNNs)~\citep{GRU4Rec,yue2024linear}, Transformer-based models~\citep{SASRec, BERT4Rec, FDSA}, and Graph Neural Networks (GNNs)~\citep{SRGNN, SURGE, DCRec, MSGIFSR}, all of which have been widely adopted to improve recommendation performance.
In addition, self-supervised learning has become a key paradigm in sequential recommendation~\citep{CL4SRec, ICLRec, Re4, DuoRec}, where contrastive objectives are applied to improve representation learning from unlabeled sequences. 
More recently, research on long-term user behavior modeling~\citep{liu2024mamba4rec,liu2024enhancing,xia2023transact,chang2023twin} has focused on improving efficiency and scalability under strict latency constraints. Representative directions include sampling-based hashing methods like SDIM~\citep{sdim}, architecture innovations such as Mamba4Rec~\citep{liu2024mamba4rec}, industrial hybrid solutions like TransAct~\citep{xia2023transact}, and life-long user modeling approaches such as TWIN~\citep{chang2023twin}, which advance the balance between effectiveness and efficiency in long-term sequential modeling. 
Our method provides a plug-and-play solution that can be integrated into existing long-sequence recommendation models.

\vspace{-2mm}
\section{Conclusion}

In this paper, we identify the session hopping phenomenon, which characterizes the stability, discontinuity, and recurrence of user interests—revealing the intrinsic challenges of long-sequence modeling. To address this issue, we propose \mtd{}, a framework that extracts multi-scale, theme-consistent subsequences for accurate prediction. \mtd{} introduces a \textit{theme-aware router} that learns latent user-interest themes via a special-designed codebook and extracts theme-specific subsequences accordingly. It further employs a \textit{multi-scale fusion mechanism} with three types of experts to capture global, short-term, and semantic representations for improved prediction. Extensive experiments show that \mtd{} consistently enhances existing long-sequence recommendation models, outperforming other MoE approaches while requiring fewer FLOPs, demonstrating an excellent balance between utility and efficiency.

\newpage
\bibliographystyle{ACM-Reference-Format}
\balance
\bibliography{contents/ref}

\appendix
\begin{figure*}
    \centering
    \includegraphics[width=0.9\linewidth]{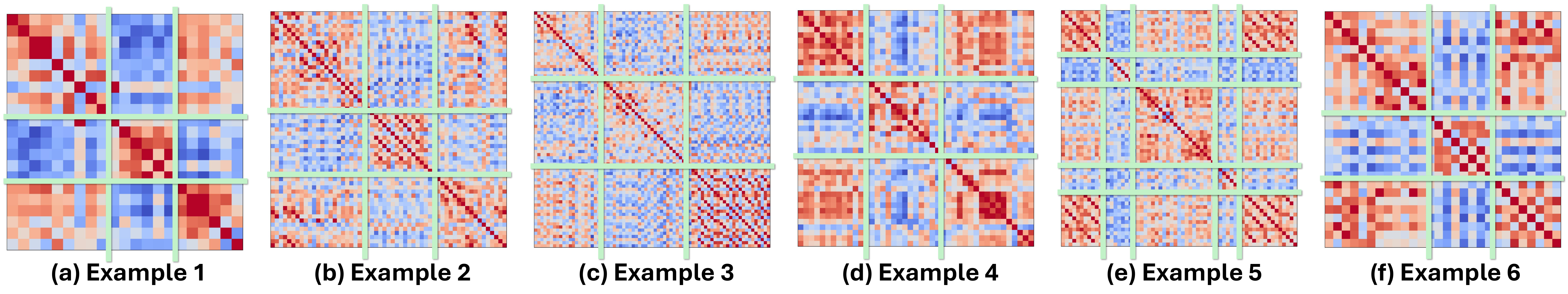}
    \vspace{-3mm}
    \caption{Examples of Session Hopping}
    \label{fig:examples_sessions}
    \vspace{-3mm}
\end{figure*}
\section*{Appendix}

\section{Examples of Session Hopping} \label{appdix:session_examples}

\begin{figure*}[htbp]
    \centering
    \begin{subfigure}[b]{0.4\linewidth}
        \centering
        \resizebox{\linewidth}{!}{
        \includegraphics{Figures/routing_behavior/frequency_distribution_by_method_frequent.png}
        }
        \caption{Router dispatch behavior for the most popular item.}
        \label{fig:popular_behavior_appdix}
    \end{subfigure}%
    \qquad
    \begin{subfigure}[b]{0.4\linewidth}
        \centering
        \resizebox{\linewidth}{!}{
        \includegraphics{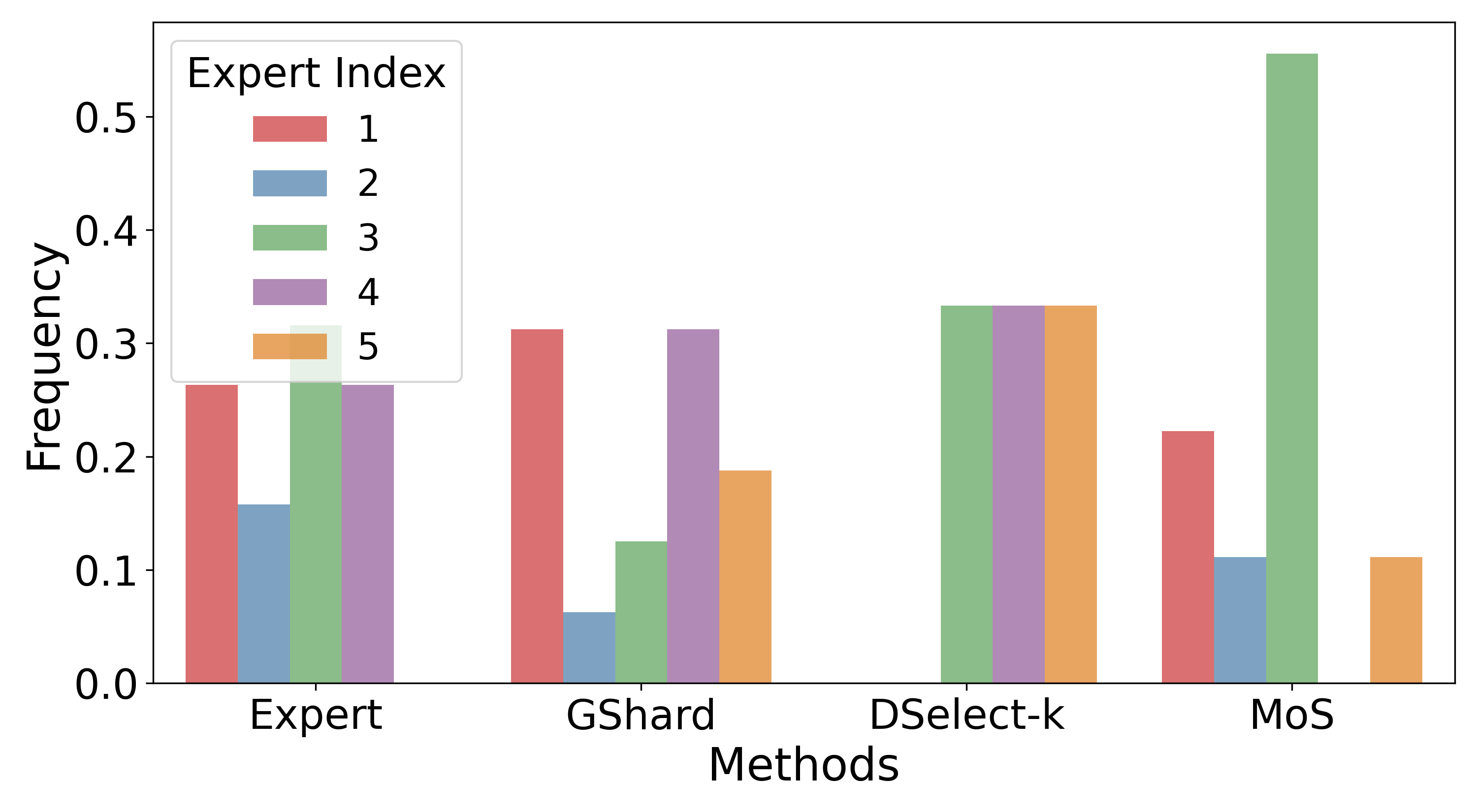}
        }
        \caption{Router dispatch behavior for the least popular item.}
        \label{fig:rare_behavior_appdix}
    \end{subfigure}
    \caption{Comparison between dispatch behavior of different MoE routers.}
    \label{fig:alpha_ablation_appdix}
\end{figure*}

To analyze user behavior patterns, we first use a pretrained embedding layer to obtain the embedding of each item in a user’s interaction history. The, we compute a self-similarity matrix of every user interaction sequence based on cosine similarity. A careful examination of these matrices reveals a widely recurring pattern, referred to in the main text as \textit{session hopping}. Representative cases are shown in Figure~\ref{fig:examples_sessions}. These examples consistently exhibit stability of interests within sessions, discontinuity across sessions, and the reappearance of interests over time, which together corroborate the prevalence of session hopping.

\section{Proof of Theme-aware Dispatch} \label{appdix:dispatch_proof}
\noindent\textbf{Setup.}
Let $\mW\in\mathbb{R}^{n\times d}$ with rows $\vw_i^\top$ and define $\va_i:=\vw_i/\|\vw_i\|_2$ for $i=1,\dots,n$.
For any input $\vx$, suppose $h(\vx)\neq \vzero$ and its direction $\vu(\vx):=h(\vx)/\|h(\vx)\|_2\in\mathbb{S}^{d-1}$.
Under the scoring in Eq.~(1), each coordinate of $H(\vx)$ equals the cosine similarity
\[
H_i(\vx)=\frac{\vw_i^\top h(\vx)}{\|\vw_i\|_2\,\|h(\vx)\|_2}
=\va_i^\top \vu(\vx),\qquad i=1,\dots,E.
\]
Let $H_{(1)}(\vx)\ge \cdots \ge H_{(n)}(\vx)$ be the sorted scores and let $S_k(\vx)$ be the index set of the top-$k$ entries of $H(\vx)$. Easy to prove, the support of $G(\vx)$ coincides with $S_k(\vx)$.

\begin{theorem}[Routing stability under cosine proximity]
\label{thm:topk-stability}
Fix $k\in\{1,\dots,n\}$ and two inputs $\vx_1,\vx_2$.
Assume: (i) the embedding directions are close in cosine,
\[
\rho:=\vu(\vx_1)^\top \vu(\vx_2)\ \ge\ 1-\delta \qquad\text{for some }\delta\in(0,1);
\]
(ii) there is a strictly positive top-$k$ margin at $\vx_1$,
\[
\gamma_k(\vx_1):=H_{(k)}(\vx_1)-H_{(k+1)}(\vx_1)>0.
\]
If $\delta<\gamma_k(\vx_1)^2/8$, then the selected experts are \emph{identical}, e.g., 
$
S_k(\vx_2)=S_k(\vx_1),
$
, hence $\supp G(\vx_2)=\supp G(\vx_1)$.
\end{theorem}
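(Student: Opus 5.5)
The approach is a perturbation argument: each score $H_i$ is a linear functional of the unit direction $\vu$ with a unit-norm coefficient vector $\va_i$. So cosine proximity of $\vu(\vx_1)$ and $\vu(\vx_2)$ forces every coordinate of $H$ to move by a small amount. The margin assumption then keeps the top-$k$ set from changing.

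The first step is to convert the cosine assumption into a Euclidean bound. Both directions lie on $\mathbb{S}^{d-1}$, so
\[
\|\vu(\vx_1)-\vu(\vx_2)\|_2^2 = 2-2\rho \le 2\delta .
\]
By Cauchy--Schwarz, using $\|\va_i\|_2=1$, for every $i$,
\[
|H_i(\vx_2)-H_i(\vx_1)| = |\va_i^\top(\vu(\vx_2)-\vu(\vx_1))| \le \sqrt{2\delta}.
\]
The hypothesis $\delta<\gamma_k(\vx_1)^2/8$ gives $\sqrt{2\delta}<\gamma_k(\vx_1)/2$. Hence every score moves by strictly less than half the margin.

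The second step is the combinatorial comparison. Take any $i\in S_k(\vx_1)$ and $j\notin S_k(\vx_1)$. Then $H_i(\vx_1)\ge H_{(k)}(\vx_1)$ and $H_j(\vx_1)\le H_{(k+1)}(\vx_1)$, so $H_i(\vx_1)-H_j(\vx_1)\ge\gamma_k(\vx_1)$. Combining this with the per-coordinate bound,
\[
H_i(\vx_2)-H_j(\vx_2) > \gamma_k(\vx_1)-2\cdot\tfrac{\gamma_k(\vx_1)}{2}=0 .
\]
So at $\vx_2$ every index of $S_k(\vx_1)$ strictly beats every index outside it. The $k$ largest entries of $H(\vx_2)$ are therefore exactly $S_k(\vx_1)$, and the top-$k$ set is unambiguous, so no tie-breaking convention is needed.

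Finally, the support of $\softmax(\KeepTopK(\cdot,k))$ is exactly the kept index set, since softmax of finite entries is strictly positive and $-\infty$ entries map to $0$. Hence $\supp G(\vx_2)=S_k(\vx_2)=S_k(\vx_1)=\supp G(\vx_1)$.

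Some bookkeeping is worth flagging. The case $k=n$ is trivial, since then $S_k$ is the full index set and $H_{(n+1)}$ is undefined; I would treat it separately or use the convention $H_{(n+1)}=-\infty$. The index range written as $E$ should be read as $n$. The crux is the factor-of-two accounting: each of the two competing scores can move by up to $\sqrt{2\delta}$, which is why the threshold is $\gamma^2/8$ rather than $\gamma^2/2$. I do not anticipate a real obstacle beyond keeping this strict-inequality chain clean.
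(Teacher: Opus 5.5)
Your proposal is correct and follows essentially the same route as the paper. Both arguments use the $1$-Lipschitz bound $|H_i(\vx_1)-H_i(\vx_2)|\le\|\vu(\vx_1)-\vu(\vx_2)\|_2\le\sqrt{2\delta}$, compare each index inside $S_k(\vx_1)$ against each index outside it to get a gap of at least $\gamma_k(\vx_1)-2\sqrt{2\delta}>0$, and conclude with $\supp G=S_k$. Your extra bookkeeping (the undefined $H_{(n+1)}$ when $k=n$, and reading $E$ as $n$) fixes small lapses the paper leaves implicit without changing the argument.
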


\begin{proof}
Set $\vu_1:=\vu(\vx_1)$ and $\vu_2:=\vu(\vx_2)$.
By unit-row normalization, each score is linear in $\vu$ with unit coefficient norm:
for all $i$, $H_i(\vx)=\va_i^\top \vu$ and $\|\va_i\|_2=1$.
Thus the coordinates of $H$ are $1$-Lipschitz in $\vu$:
\[
\big|H_i(\vx_1)-H_i(\vx_2)\big|
=\big|\va_i^\top(\vu_1-\vu_2)\big|
\le \|\vu_1-\vu_2\|_2
=: \varepsilon
\qquad\forall i.
\]
The cosine condition gives a bound on $\varepsilon$:
\[
\varepsilon
=\|\vu_1-\vu_2\|_2
=\sqrt{2-2\,\vu_1^\top\vu_2}
\le \sqrt{2\delta}.
\]
Pick any $i\in S_k(\vx_1)$ and any $j\notin S_k(\vx_1)$.
By the above coordinate-wise Lipschitz bound,
\[
H_i(\vx_2)\ \ge\ H_i(\vx_1)-\varepsilon,
\qquad
H_j(\vx_2)\ \le\ H_j(\vx_1)+\varepsilon,
\]
so
$
H_i(\vx_2)-H_j(\vx_2)
\ \ge\ \big(H_i(\vx_1)-H_j(\vx_1)\big)-2\varepsilon
\ \ge\ \gamma_k(\vx_1)-2\varepsilon.
$
If $\gamma_k(\vx_1)>2\varepsilon$, then $H_i(\vx_2)>H_j(\vx_2)$ for every such pair $(i,j)$, which preserves the entire top-$k$ ordering relation and hence the top-$k$ index set:
$S_k(\vx_2)=S_k(\vx_1)$.
Since $\varepsilon\le \sqrt{2\delta}$, the sufficient condition $\gamma_k(\vx_1)>2\varepsilon$ holds whenever
$
\delta<\frac{\gamma_k(\vx_1)^2}{8},
$
which proves the claim.
Finally, $\supp G(\vx)=S_k(\vx)$ by construction of $\KeepTopK$ and $\softmax$, hence the selected experts coincide as stated.
\end{proof}

\noindent\textbf{Remarks.}
In \mtd{}, $h(\cdot)$ is implemented as an MLP that satisfies Lipschitz continuity. The Lipschitz continuity of $h(\cdot)$ guarantees that small input perturbations induce small changes of the embedding direction and therefore of the cosine term in (ii).
The stability threshold in Theorem~\ref{thm:topk-stability} is explicit: the larger the top-$k$ margin $\gamma_k(\vx_1)$, the larger the admissible $\delta$.
Moreover, if any two items within a sequence exhibit high similarity, then by Theorem~\ref{thm:topk-stability}, all such items will be routed consistently to the same expert, thereby completing the proof of Proposition~\ref{prop:theme_dispatch}.

\section{Detailed Experimental Settings}
\subsection{Dataset Descriptions} \label{appdx:dataset_desc}

In this paper, we validate the effectiveness of \mtd{} on 3 different datasets, MicroVideo \cite{microvideo}, KuaiVideo \cite{kuaivideo}, and EBNeRD \cite{ebnerd}. The detailed information is listed below.

\begin{itemize}
    \item MicroVideo is provided by the THACIL work, which contains 12,737,617 interactions that 10,986 users have made on 1,704,880 micro-videos. The features include user id, item id, category, and the extracted image embedding vectors of cover images of micro-videos.
    \item KuaiVideo is released by the Kuaishou Competition in the China MM 2018 conference, which aims to predict users' click probabilities for new micro-videos. In this dataset, there are multiple types of interactions between users and micro-videos, such as "click", "not click", "like", and "follow". Particularly, "not click" means the user did not click the micro-video after previewing its thumbnail. 
    \item Ekstra Bladet News Recommendation Dataset (EBNeRD) encompasses data from over a million unique users and more than 37 million impression logs from Ekstra Bladet. It also includes a collection of over 125,000 Danish news articles, complete with titles, abstracts, bodies, and categories.
\end{itemize}

\subsection{Metric Descriptions}\label{appdix:metric_desc}
\paragraph{AUC (ROC AUC)}
AUC evaluates a binary scorer \(s(x)\) as a ranking metric. It equals the probability that a randomly drawn positive instance receives a higher score than a randomly drawn negative.  Let \(\mathcal{P}\) and \(\mathcal{N}\) be the sets of positive and negative examples with sizes \(n_{+}\) and \(n_{-}\). With scores \(s_i\) and standard tie handling, we have
\[
\mathrm{AUC}
= \frac{1}{n_{+} n_{-}}
\sum_{i\in \mathcal{P}} \sum_{j\in \mathcal{N}}
\Big[\mathbf{1}\{s_i > s_j\} + \tfrac{1}{2}\mathbf{1}\{s_i = s_j\}\Big].
\]
Equivalently, let \(R_i\) be the rank of \(s_i\) among all scores in ascending order and let the index set of positives be \(\mathcal{P}\). Then
\[
\mathrm{AUC}
= \frac{\sum_{i\in \mathcal{P}} R_i - \frac{n_{+}(n_{+}+1)}{2}}{n_{+} n_{-}}.
\]
This statistic matches the Wilcoxon Mann Whitney test scaled to the unit interval, which provides robustness under class imbalance.

\paragraph{GAUC (Group AUC)}
GAUC extends AUC to settings with natural groups such as per user or per query evaluation. The goal is to measure within group ranking quality and then aggregate across groups. For each group \(g\) that contains at least one positive and one negative, compute its AUC value \(\mathrm{AUC}_g\) as above.  The overall GAUC is a weighted average. Pair count weighting is common and unbiased for pairwise comparisons. Let \(n_g^{+}\) and \(n_g^{-}\) be the numbers of positives and negatives in group \(g\). GAUC could be expressed as:
\[
\mathrm{GAUC} = \frac{\sum_{g} w_g \,\mathrm{AUC}_g}{\sum_{g} w_g},
\qquad
w_g = n_g^{+} n_g^{-}.
\]

\paragraph{FLOPs}
FLOPs measure theoretical arithmetic workload for floating point computation. A single multiply or add counts as one floating point operation. FLOPs are hardware agnostic and are widely used to measure the efficiency of modern machine learning algorithm.

\section{Routing Behavior Analysis} \label{appdix:routing_behavior}

To better understand why \mtd{} outperforms other MoE methods, we conduct a fine-grained analysis of the router’s dispatch behavior at the item level. Specifically, on the MicroVideo dataset, many items are repeatedly selected by different users. We first filter out items selected fewer than eight times, and then identify two representative targets: the most frequently selected item and the least frequently selected item. For each target item, we record the expert assigned by the router at every selection and compute the frequency with which each expert is chosen. All MoE variants are configured with five experts. Figure~\ref{fig:popular_behavior_appdix} and Figure~\ref{fig:rare_behavior_appdix} illustrate the dispatch behavior for the most popular and least popular items, respectively.

From these results, several observations can be made.
(1) \mtd{} consistently routes the same item to a fixed expert, effectively leveraging the theme-aware specialization of experts. In contrast, other MoE methods tend to distribute items uniformly across experts, reflecting weak expert differentiation.
(2) When handling the most popular and the least popular items, \mtd{} relies on distinct experts. Specifically, expert \#2 is responsible for the most popular item, whereas expert \#3 dominates for the least popular one. Based on these two observations, we safely draw the conclusion that \textbf{the consistency of dispatch behavior and the specialization of experts are central to the superior performance of \mtd{}}.

\section{More Related Works}

\subsection{Mixture of Experts in Recommendation}
The Mixture-of-Experts architecture~\citep{jacobs1991adaptive} was originally introduced to improve supervised learning by combining multiple specialized models through a gating mechanism \cite{ai2025resmoe}. MoE has been widely adopted in recommendation systems~\citep{qin2020multitask,li2020video,bian2023multi,zhang2025hierarchical,zeng2025hierarchical,liu2024collaborative} particularly in multi-task learning scenarios. For instance, MMOE~\citep{ma2018modeling} introduces a multi-gate mechanism to balance multiple objectives, SNR~\citep{ma2019snr} improves the flexibility of parameter sharing via sub-network routing, and PLE~\citep{tang2020progressive} proposes progressive layered extraction to mitigate the seesaw effect in multi-task recommendation. MoE has also been further extended to sequential recommendation~\citep{bian2023multi} and cross-domain recommendation~\citep{hou2022towards,zhang2025cold}.
Despite these advances, traditional MoE methods route inputs to all experts, leading to high computational cost. Sparse MoE (SMoE) addresses this by activating only a subset of experts~\citep{MOE1,MOE2,switchtrans,pan2025fsmoe}, significantly improving efficiency while preserving model capacity. 

\subsection{Structured User Behavior Modeling}
In the era of big data~\cite{zeng2025harnessing,zeng2026subspace,li2025flow,wang2023networked,wang2023noisy,ban2021ee}, user recommendation data are collected and represented through a wide range of data modalities, including text~\cite{xu2025instructagent,wei2025cofirec,xu2024slmrec,liu2025selfelicit,wei2026agentic}, images~\cite{weitowards,ning2025graph4mm,lin2025moralise,liu2025climb,liu2025seeing}, time series~\cite{xu2024towards,ning2024information,lin2024backtime,lin2025cats,liang2025external,liu2025breaking}, and graph-structured data~\cite{zeng2025pave,yu2025joint,yu2025planetalign,chen2024masked,liu2024class}. Consequently, user behaviors exhibit rich structural patterns, such as temporal dynamics~\cite{yoo2024ensuring,yoo2025embracing,yoo2025generalizable,zeng2025interformer} and relational dependencies~\cite{zhang2024multi,wei2020fast,chen2024macro,qiu2025graph,li2023metadata}, making effective modeling of structured signals a fundamental challenge in recommendation.

First, Graph neural networks (GNNs)~\cite{velivckovic2017graph,kipf2016semi,bei2025graphs,lin2024bemap,xu2023node,fu2023natural,xu2024slog,xu2022graph,xu2022generalized} have been widely adopted to model such graph-structured user–item interactions. Homogeneous GNNs~\cite{he2020lightgcn,fan2019graph,chang2020bundle,yan2021bright,yan2021dynamic,yan2022dissecting,roach2020canon} are commonly used for collaborative filtering due to their low-pass filtering property, which facilitates learning smooth and similar user–item representations. In contrast, heterogeneous GNNs~\cite{shi2020heterogeneous1,wang2020disenhan,yan2024pacer,yan2024topological,yan2024thegcn} incorporate diverse side information, such as social relations, item attributes, and contextual links, yielding expressive representations. Furthermore, graph contrastive learning~\cite{liu2024simgcl,yu2023xsimgcl,wei2022augmentations,cai2023lightgcl,zheng2024pyg,ding2022data,yang2024simce} improves robustness by enforcing representation consistency under graph augmentations.

Second, recent research focuses on session-based recommendation~\cite{xu2019graphSession1,li2017session2,liu2023session3,qiao2025multi1session4,hidasi2015session,lee2025session5llm,wang2021session6graph}, which aims to capture both short-term and long-term temporal dependencies from sequential interaction data. Early approaches~\cite{GRU4Rec,li2017session2} predominantly adopt recurrent neural networks to capture sequential transition patterns within sessions, while subsequent methods~\cite{liu2018stamp,MiasRec,CORE} introduce attention mechanisms to better distinguish the main intent of a session from transient noise. More recently, graph-based session models~\cite{SRGNN,GCSAN,GCE-GNN,xu2019graphSession1} have been proposed to explicitly model item-to-item transitions by constructing session graphs, enabling the capture of complex local dependencies beyond simple sequential orderings. Extensions along this direction further exploit higher-order relations among items, such as hypergraph-based formulations~\cite{wang2021session6graph,su2023enhancing} that model item co-occurrence under multiple contextual windows, allowing richer semantic interactions to be encoded within a session. In parallel, recent studies~\cite{MSGIFSR,MiasRec,atten-mixer} explore enhanced session representations via disentangling short-term intent from broader contextual signals. Collectively, these session-based approaches demonstrate strong effectiveness in capturing short-term user preferences, yet they largely assume relatively homogeneous interest patterns within a session and remain challenged when user intents exhibit abrupt shifts or recur intermittently over longer interaction horizons.

\section{Ethical Use of Data and Informed Consent}

All experiments in this paper use only publicly available datasets from the LongCTR benchmark \footnote{\url{https://github.com/reczoo/LongCTR}}. We did not collect new data, contact users, or conduct any interventions. We complied with the licenses and terms of use specified by the benchmark and its constituent datasets. 
To support transparency, we will release our code, configuration files, and evaluation scripts upon publication.

\end{document}